\documentclass[fontsize=12pt,a4paper,headings=normal,twoside=false,leqno,parskip=half-,abstract=true]{scrartcl}
\pdfoutput=1
\usepackage[english]{babel}
\usepackage[utf8]{inputenc}
\setlength{\textwidth}{15.5cm}
\setlength{\textheight}{23.5cm}
\usepackage{hyperref}
\hypersetup{
 pdftitle={Chaos in subcritical HL cosmologies},
 pdfauthor={Phillipo Lappicy},
 colorlinks=true,
 linkcolor=blue,
 citecolor=blue,
 filecolor=blue,
 urlcolor=blue}


\pagestyle{plain}

\setlength{\topmargin}{-1.5cm}
\setlength{\textheight}{25cm}
\setlength{\textwidth}{16.5cm}
\setlength{\oddsidemargin}{-0.25cm}
\setlength{\evensidemargin}{-0.25cm}
\footskip=1.5cm

\usepackage{graphicx} 
\usepackage[format=plain,labelfont=bf,font=small]{caption}
\usepackage{subfigure}
\usepackage{xcolor}
\usepackage[arrow, matrix, curve]{xy}
\usepackage{float}
\usepackage{caption}
\captionsetup{font=footnotesize}
\usepackage{tikz}
\usepackage{tikz-cd}
\usetikzlibrary{arrows, arrows.meta, calc, intersections, decorations.markings}
\tikzset{
decoration={markings,mark=at position 0.67 with {\arrow[thick,color=gray]{latex}}}
}

\usepackage{amsmath,amsthm}
\usepackage{amssymb} 
\usepackage{latexsym}
\usepackage{mathrsfs}

\usepackage{wasysym}

\DeclareOldFontCommand{\bf}{\normalfont\bfseries}{\mathbf}
\DeclareOldFontCommand{\it}{\normalfont\itshape}{\textit}

\usepackage{comment}


\newtheorem{thm}{Theorem}[section]
\newtheorem{lem}[thm]{Lemma}
\newtheorem{prop}[thm]{Proposition}

\usepackage[notref,notcite,color,final 
]{showkeys}

\newcommand{\RR}{\mathbb R}

\newcommand{\NN}{\mathbb N}



\newcommand{\A}{\mathrm{{A}}}

\newcommand{\KC}{\mathrm{K}^{\ocircle}}

\newcommand{\KM}{\mathcal{K}}

\renewcommand{\S}{\cal S}


\newcommand{\Q}{{\bf \mathrm{Q}}}
\newcommand{\T}{{ \mathrm{T}}}

\newcommand{\tab}{{\bf \mathrm{t}_{12}}}
\newcommand{\tac}{{\bf \mathrm{t}_{13}}}
\newcommand{\tba}{{\bf \mathrm{t}_{21}}}
\newcommand{\tbc}{{\bf \mathrm{t}_{23}}}
\newcommand{\tca}{{\bf \mathrm{t}_{31}}}
\newcommand{\tcb}{{\bf \mathrm{t}_{32}}}

\begin{document}

\title{{\LARGE{Periodic orbits in Ho\v{r}ava-Lifshitz cosmologies}}}

\author{
 \\
{~}\\
Kevin E. M. Church*, Olivier Hénot**, \\ 
Phillipo Lappicy***, Jean-Philippe Lessard**\\
and Hauke Sprink****
\vspace{2cm}}

\date{ }
\maketitle
\thispagestyle{empty}

\vfill

$\ast$\\
Universit\'e de Montr\'eal, Canada\\
$\ast\ast$\\
McGill University, Canada\\
$\ast\ast\ast$\\
Universidade Federal do Rio de Janeiro, Brazil\\
$\ast\ast\ast\ast$\\
Freie Universität Berlin,  Germany\\


\newpage
\pagestyle{plain}
\pagenumbering{arabic}
\setcounter{page}{1}

\begin{abstract}
\noindent We consider spatially homogeneous Ho\v{r}ava-Lifshitz (HL) models that perturb General Relativity (GR) by a parameter $v\in (0,1)$ such that GR occurs at $v=1/2$. 
We describe the dynamics for the extremal case $v=0$, which possess the usual Bianchi hierarchy: type $\mathrm{I}$ (Kasner circle of equilibria), type $\mathrm{II}$ (heteroclinics that induce the Kasner map) and type $\mathrm{VI_0},\mathrm{VII_0}$ (further heteroclinics). For type $\mathrm{VIII}$ and $\mathrm{IX}$, we use a computer-assisted approach to prove the existence of periodic orbits which are far from the Mixmaster attractor and thereby we obtain a new behaviour which is not described by the BKL picture of bouncing Kasner-like states.


\noindent \textbf{Keywords:} Ho\v{r}ava-Lifshitz cosmology, spatially homogeneous models, computer-assisted proofs.

\end{abstract}

\section{Introduction}\label{sec:intro}

\numberwithin{equation}{section}
\numberwithin{figure}{section}
\numberwithin{table}{section}

Ho{\v{r}}ava proposed a renormalizable, higher order derivative gravity theory that recovers general relativity (GR) in low energy but with improved high-energy behaviors, see \cite{hor09a,hor09b, HL_status_report}. This approach violates full spacetime diffeomorphism and introduces anisotropic scalings of space and time.
The deformation of the kinetics was firstly considered by DeWitt in \cite{DeWitt67}, whereas higher order derivatives in the potential was originally suggested by Lifshitz in \cite{Lifshitz41}.

More precisely, Ho{\v{r}}ava gravity is a gauge theory formulated in terms of a lapse $N$ and a shift vector $N^i$, which serve as Lagrange multipliers for the constraints in a Hamiltonian context, and a three-dimensional Riemannian metric $g_{ij}$ on the slices of the preferred foliation. We consider projectable theories, when the lapse depends only on time.
These objects arise from a 3+1 decomposition of a 4-metric according to,
\begin{equation}\label{genmetric}
\mathbf{g} = -N^2dt\otimes
dt + g_{ij}(dx^i + N^idt)\otimes (dx^j + N^jdt).
\end{equation}
In suitable units/scalings, the dynamics of Ho{\v{r}}ava vacuum gravity is governed by the
action
\begin{subequations}
\begin{equation}\label{action}
S = \int N\sqrt{ \det g_{ij}}({\cal T} - {\cal V}) dtd^3x,
\end{equation}
where ${\cal T}$ and ${\cal V}$ are given by
\begin{align}
{\cal T} &= K_{ij}K^{ij} - \lambda (K^k\!_k)^2,\label{kin}\\
{\cal V} &= {}^1 {\cal V}+{}^2 {\cal V}+{}^3 {\cal V}+{}^4 {\cal V}+{}^5 {\cal V}+{}^6 {\cal V}+\dots ,\\
&= k_1 R + k_2 R^2 + k_3 R^i\!_jR^j\!_i + k_4 R^i\!_jC^j\!_i + k_5 C^i\!_jC^j\!_i + k_6 R^3+ \dots\, .\label{calV}
\end{align}
\end{subequations}
Here $K_{ij}$ is the extrinsic curvature, $R$ and $R_{ij}$ are the scalar curvature
and Ricci tensor (of the spatial metric $g_{ij}$), respectively,
while $C_{ij}$ is the Cotton-York tensor~\cite{hor09b}, while $\lambda, k_1,\dots ,k_6$ are real parameters. 
Each potential term ${}^i {\cal V}$, where $i=1,\dots,6$, is defined as the $i^{th}$ summand in \eqref{calV}.
Repeated indices are summed over according to Einstein's
summation convention.

Full spacetime diffeomorphism invariance in GR fixes $\lambda=1$
uniquely and set all parameters of ${\cal V}$ in~\eqref{calV} to
zero, except $k_1=-1$ (i.e., ${\cal V} = -R$), see~\cite{hor09a,hor09b}.
Thus GR is a special case among the Ho{\v{r}}ava models.
The introduction of $\lambda$ changes the scaling properties of the field
equations, as does the introduction of additional curvature terms. Since some of
the curvature terms have different scaling properties, sums of such terms in ${\cal V}$
result in that the field equations no longer are scale-invariant.

The classical Belinski, Khalatnikov and Lifshitz (BKL) picture suggests that generic singularities in GR are: (i) \textit{vacuum dominated}, (ii) \textit{local} and (iii) \textit{oscillatory}.
In this regard, vacuum spatially homogeneous cosmologies, the Bianchi models, are expected to play a key role in the dynamical asymptotic behaviour, see \cite{bkl70,bkl82,Mixmaster,ugg13a,ugg13b}.
Similarly, the Bianchi models in Ho{\v{r}}ava gravity are also expected to describe generic singularities, see \cite{Bakas10,Kamenshchik,HellLappicyUggla}. 
In general, it is heuristically argued that there is an `asymptotically dominant' curvature term in \eqref{calV} toward the initial singularity, yielding a respective dominant Bianchi model, see \cite[Appendix A]{HellLappicyUggla}.

In Appendix \ref{app}, we deduce the dominant vacuum Bianchi models in Ho{\v{r}}ava gravity:
\begin{subequations}\label{full:subs}
	\begin{align}
	\Sigma_+^\prime  &= 4v(1-\Sigma^2)\Sigma_+ + \, {\cal S}_+, \label{full:Sigma+} \\
	\Sigma_-^\prime   &= 4v(1-\Sigma^2)\Sigma_- + \, {\cal S}_-, \label{full:Sigma-} \\
	N_1^\prime 			  &= -2(2v\Sigma^2 - 2\Sigma_+) \, N_1, \label{full:N1} \\
	N_2^\prime 			  &= -2(2v\Sigma^2 + \Sigma_+ + \sqrt{3}\Sigma_-)N_2, \label{full:N2} \\
	N_3^\prime 			  &= -2(2v\Sigma^2 + \Sigma_+ - \sqrt{3}\Sigma_- )N_3, \label{full:N3} 
	\end{align}
\end{subequations}
for some parameter $v \in [0,1]$, where the vector field is defined as follows
\begin{subequations}
	\begin{align}
	\Sigma^2  &	:= \Sigma_+^2 + \Sigma_-^2 ,\\
	\S_+ &  := 2 \left[ \left(N_3-N_2\right)^2 - N_1\left(2N_1-N_2-N_3\right) \right],\\
	\S_- &  := 2\sqrt{3} \left(N_3-N_2\right)\left(N_1-N_2-N_3\right).
	\end{align}
\end{subequations}
We denote ${}^\prime = d/d\tau$ the time derivative with respect to a time variable 
such that the singularity occurs as $\tau \rightarrow \infty$.
The evolution equations \eqref{full:subs} are bound to the following constraint which restricts the phase space $\RR^5$ to a four-dimensional invariant submanifold
\begin{equation} \label{constraint}
1 =  \Sigma^2 + \Omega_k.
\end{equation}
where $\Omega_k:=N_1^2 + N_2^2 + N_3^2 - 2N_1N_2 - 2N_2N_3 - 2N_3N_1$.

Equations \eqref{full:subs} describe vacuum spatially homogeneous models in GR when $v=1/2$. 
Moreover, in \cite[Appendix A.2]{HellLappicyUggla}, they argue that the equations \eqref{full:subs} are expected to asymptotically describe the dynamics of each individual potentials ${}^i {\cal V}$, where $i=1, \dots ,6$, for the respective parameters,
\begin{equation}\label{v's}
    {}^1v := \frac{1}{\sqrt{2(3\lambda-1)}}, \quad {}^{2}v = {}^{3}v = \frac{{}^1 v}{4}, \quad {}^4v = \frac{{}^1 v}{10}, \quad {}^{5}v = {}^{6}v = 0.
\end{equation}
For example, the exact equations for vacuum spatially homogeneous $\lambda$-R models arise for the parameter $v={}^1v$, whereas a Ho{\v{r}}ava model with only a cubic potential ${}^6 {\cal V}=k_6 R^3$ (i.e., with $k_1=\ldots=k_5=0,k_6\neq 0$) has dominant asymptotic equation with parameter $v={}^{6}v=0$. 

For $v=1/2$, the dynamics of equations \eqref{full:subs} has been extensively considered in the GR literature. A major achievement is the attractor theorem, which states that the $\omega$-limit set of generic solutions of Bianchi type $\mathrm{IX}$ is contained in the space of solutions of Bianchi type $\mathrm{I}$ and $\mathrm{II}$, also known as the Mixmaster attractor, see \cite{Ringstrom, heiugg09b}.
Therefore, it is expected that the concatenation of heteroclinic orbits of Bianchi type $\mathrm{II}$ and the induced map of Bianchi type $\mathrm{I}$ (the so-called \textit{Kasner map}) play a key role in the dynamics.
More rigorous results can be found, for example, in \cite{Beguin,Liebscher,Bernhard,Dutilleul}. A state~of~the~art overview is provided in \cite{Mixmaster}.

For $v\neq 1/2$, some features of GR persist, such as the Bianchi hierarchy of invariant sets. 
Type $\mathrm{I}$ consists of all $N_\alpha$ being zero, type $\mathrm{II}$ has a single non-zero $N_\alpha$, types $\mathrm{VI}_0,\mathrm{VII}_0$ have two non-zero $N_\alpha$, and Bianchi types $\mathrm{VIII},\mathrm{IX}$ consist of three non-zero $N_\alpha$, see \cite{HellLappicyUggla,LappicyDaniel}.

We now investigate the dynamics of the equations \eqref{full:subs} for the extremal case, $v=0$, which describes the asymptotics in case of a dominant cubic curvature term $k_6 R^3$ in \eqref{calV}:
\begin{subequations}\label{full:subsv=0}
	\begin{align}
	\Sigma_+^\prime  &= 2 \left[ \left(N_3-N_2\right)^2 - N_1\left(2N_1-N_2-N_3\right) \right], \label{full:Sigma+v=0}\\
	\Sigma_-^\prime   &= 2\sqrt{3} \left(N_3-N_2\right)\left(N_1-N_2-N_3\right), \label{full:Sigma-v=0}\\
	N_1^\prime 			  &= 4\Sigma_+ N_1, \label{full:N1v=0} \\
	N_2^\prime 			  &= -2( \Sigma_+ + \sqrt{3}\Sigma_-)N_2, \label{full:N2v=0} \\
	N_3^\prime 			  &= -2( \Sigma_+ - \sqrt{3}\Sigma_- )N_3, \label{full:N3v=0} 
	\end{align}
\end{subequations}
bound to the constraint \eqref{constraint}. Note that there is a conserved quantity given by 
\begin{equation}\label{eq:conserved_quantity}
    \Delta:= 3 \, |N_1 N_2 N_3|^{2/3} \qquad \text{ such that } \qquad \Delta'=0.
\end{equation}
The remaining of the paper describes the dynamics within the Bianchi hierarchy:
Section \ref{sec:typeIandII} constructs the types $\mathrm{I},\mathrm{II}$ with the induced Kasner map,
Section \ref{sec:typeVIandVII} reports on the types $\mathrm{VI_0},\mathrm{VII_0}$, and
Section \ref{sec:typeVIIIandIX} describes types $\mathrm{VIII},\mathrm{IX}$. 
Section \ref{sec:disc} possess concluding remarks.

\section{Bianchi type $\mathrm{I}$ and $\mathrm{II}$}\label{sec:typeIandII}

Bianchi type $\mathrm{I}$ solutions are characterized by all $N_\alpha = 0,\alpha=1,2,3$. The constraint \eqref{constraint} reduces the phase space to the  \textit{Kasner circle} of equilibria:
\begin{align} \label{defKC}
\KC := \left\{ \left(\Sigma_+,\Sigma_-, 0,0,0\right) \in \RR^5 \,\, |  \,\,
\Sigma_+^2 + \Sigma_-^2 = 1 \right\}.
\end{align}
There are three special points in $\KC$ corresponding to the Taub representation of Minkowski spacetime in GR. They are therefore called the \emph{Taub points} and given by
\begin{align}\label{Taubpoints}
\T_1 := \left(-1,0\right), \hspace{1.2cm}
\T_2 := \left(\dfrac{1}{2},\frac{\sqrt 3}{2}\right), \hspace{1.2cm} \T_3 := \left(\dfrac{1}{2},-\dfrac{\sqrt 3}{2}\right).
\end{align}
Note that the existence of the Kasner circle is independent on the parameter $v\in [0,1]$. Its stability, however, depends strongly on the parameter and this affects the type $\mathrm{II}$ solutions. 

In general, linearization of equation~\eqref{full:subsv=0} at $\mathrm{K}^{\ocircle}$ results in $N'_1 = 2\Sigma_+|_{\mathrm{K}^{\ocircle}}{N}_1$, and thereby the stability behaviour of $N_{1}$ changes when $\Sigma_+|_{\mathrm{K}^{\ocircle}} = 0$. We define the \emph{unstable Kasner arc}, denoted by $\mathrm{int}(A_{1})$, to be the points in $\mathrm{K}^{\ocircle}$ that are unstable in the $N_1$ variable, i.e., when $\Sigma_+ > 0$. The closure of $\mathrm{int}(A_{1})$ is denoted by
$A_1$ and is given by
\begin{equation}\label{A_1}
A_1:= \left\{ (\Sigma_+,\Sigma_-,0,0,0)\in \mathrm{K}^{\ocircle} \text{ $|$ }
\Sigma_+ \geq 0\right\}.
\end{equation}
Note that $A_1$ is a symmetric portion of
$\mathrm{K}^{\ocircle}$ with $\mathrm{Q}_1 := -\mathrm{T}_1$ in the middle. 
Equivariance yields the arcs $A_2,A_3$, where the respective variables $N_2,N_3$ are unstable. 
Define $\mathbf{A_{\alpha\beta}}:= A_\alpha\cap A_\beta$ and $\mathbf{A}:= {\cup_{\alpha\beta}} \, \mathbf{A_{\alpha\beta}}$,  for distinct $\alpha,\beta=1,2,3$.

Bianchi type $\mathrm{II}$ solutions consist of three disjoint hemispheres with a common boundary: the Kasner circle.
More precisely, it is the set of solutions where two $N$-variables are zero and one is nonzero, i.e. $\mathrm{II}_1 \cup \mathrm{II}_2 \cup \mathrm{II}_3$, where  
\begin{align}\label{BII_N_1}
\mathrm{II}_1 :=\left\{ \left( \Sigma_+,\Sigma_-, N_1,0,0\right) \in \RR^5 \,\, \Big|  \begin{array}{c}
N_1 > 0  \\
{N_1}^2 = 1-\Sigma^2
\end{array} \right\},
\end{align}
and $\mathrm{II}_2, \mathrm{II}_3$ are obtained by symmetry with a different non-zero $N$-variable.

Solutions of \eqref{full:subsv=0} in the hemisphere $\mathrm{II}_1$ are heteroclinics between two Kasner equilibria with $\alpha$-limit sets in $\text{int}(A_1)$ and $\omega$-limit in the complement $A_1^c:=\mathrm{K}^{\ocircle} \backslash A_1$.
Indeed, \eqref{full:subsv=0} becomes $\Sigma_-'=0$ (i.e., $\Sigma_-$ is constant) and $\Sigma_+'= -4(1-\Sigma^2)$, which implies that $\Sigma_+$ is monotonically decreasing for $\Sigma^2<1$. 
Similarly for $\mathrm{II}_2, \mathrm{II}_3$, see Figure \ref{KC_3d_map}. 
\vspace{-0.25cm}
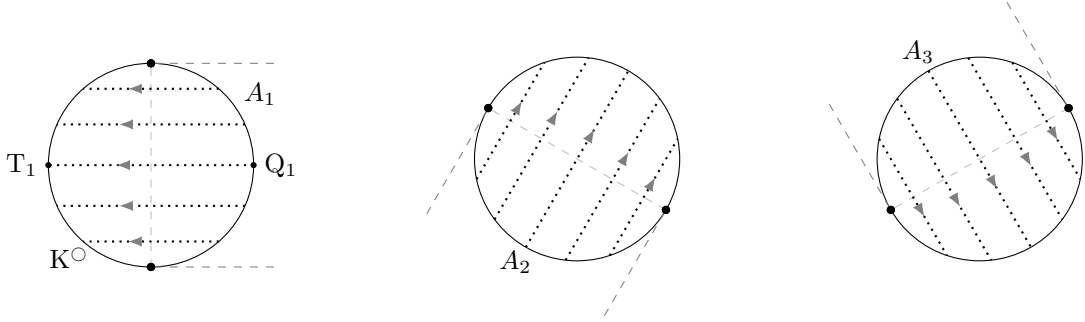
\begin{figure}[H]
	\centering
	\begin{tikzpicture}[scale=1.35]

\draw [lightgray,dashed] ( 0,-1) -- (0,1) ;

\draw (1,0) arc (0:360:1cm and 1cm);	 	
	
\draw  (0.8,0.7) circle (0pt) node [anchor=west]{\footnotesize $A_1$};

\filldraw [black] (-0.8,-0.7) circle (0pt) node[anchor=north] {\footnotesize $\KC$}; 	

\filldraw [black] (-1,0) circle (0.7pt) node[anchor=east] {\footnotesize $\T_1$};
\filldraw [black] (1,0) circle (0.7pt) node[anchor=west] {\footnotesize $\Q_1$};
	
\draw [gray, dashed, - ]  (1.2,1)--(0,1);
\draw [gray, dashed, - ]  (1.2,-1)--(0,-1);

\filldraw (0,-1) circle (1 pt);
\filldraw (0,1) circle (1 pt);
\draw (0,1) circle (1 pt);
\draw (0,-1) circle (1 pt);

\draw [dotted, thick, postaction={decorate}]  (1,0)--(-1,0);

\draw [dotted, thick, postaction={decorate}]  (0.9165,0.4)--(-0.9165,0.4);
\draw [dotted, thick, postaction={decorate}]  (0.9165,-0.4)--(-0.9165,-0.4);
	
\draw [dotted, thick, postaction={decorate}]  (0.66143,0.75)--(-0.66143,0.75);
\draw [dotted, thick, postaction={decorate}]  (0.66143,-0.75)--(-0.66143,-0.75);

\draw [white]  (0.2660,-1.5392) circle (0.01pt);

	\end{tikzpicture}
	\qquad\quad
	\begin{tikzpicture}[scale=1.35]

\draw [lightgray,dashed] ( 0.866,-0.5) -- (-0.866,0.5) ;

\draw (1,0) arc (0:360:1cm and 1cm);	 	

\draw  (-0.6, -0.8) circle (0pt) node [anchor=north]{\footnotesize $A_2$};

\draw [gray,dashed, - ]  (-1.4660,-0.5392)--(-0.8660,0.5);
\draw [gray,dashed, - ]  (0.2660,-1.5392)--(0.8660,-0.5);

\filldraw (0.8660,-0.5) circle (1 pt);
\filldraw (-0.8660,0.5) circle (1 pt);
\draw (0.8660,-0.5) circle (1 pt);
\draw (-0.8660,0.5) circle (1 pt);


\draw [dotted, thick, postaction={decorate}]  (-0.5000, -0.8660)--(0.5000,0.8660);

\draw [dotted, thick, postaction={decorate}]  (-0.8047, -0.5937)--(0.1118, 0.9937);
\draw [dotted, thick, postaction={decorate}]  (-0.1118, -0.9937)--(0.8047, 0.5937);

\draw [dotted, thick, postaction={decorate}]  (-0.9802, -0.1978)--(-0.3188, 0.9478);
\draw [dotted, thick, postaction={decorate}]  (0.3188, -0.9478)--(0.9802, 0.1978);

\filldraw (1.6,1.6) circle (0pt);
\filldraw (-1.6,1.6) circle (0 pt);
\filldraw (1.6,-1.6) circle (0 pt);
\filldraw (-1.6,-1.6) circle (0 pt);

	\end{tikzpicture}
	\qquad
        \begin{tikzpicture}[scale=1.35]
\draw [lightgray,dashed] ( 0.866,0.5) -- (-0.866,-0.5) ;

\draw (1,0) arc (0:360:1cm and 1cm);	 	

\draw  (-0.6, 0.84) circle (0pt) node [anchor=south]{\footnotesize $A_3$};

\draw [dashed, -,  gray]  (-1.4660, 0.5392)--(-0.866 , -0.5);
\draw [dashed, - , gray]  (0.2660,1.5392)--(0.866, 0.5);

\filldraw (0.866,0.5) circle (1 pt);
\filldraw (-0.866,-0.5) circle (1 pt);
\draw (0.866,0.5) circle (1 pt);
\draw (-0.866,-0.5) circle (1 pt);

\draw [dotted, thick, postaction={decorate}]  (-0.5000,0.8660)--(0.5000,-0.8660);

\draw [dotted, thick, postaction={decorate}]  (-0.8047, 0.5937)--(0.1118, -0.9937);
\draw [dotted, thick, postaction={decorate}]  (-0.1118, 0.9937)--(0.8047, -0.5937);

\draw [dotted, thick, postaction={decorate}]  (-0.9802, 0.1978)--(-0.3188, -0.9478);
\draw [dotted, thick, postaction={decorate}]  (0.3188, 0.9478)--(0.9802, -0.1978);

\filldraw (1.6,1.6) circle (0 pt);
\filldraw (-1.6,1.6) circle (0 pt);
\filldraw (1.6,-1.6) circle (0 pt);
\filldraw (-1.6,-1.6) circle (0 pt);

\draw [white]  (0.2660,-1.5392) circle (0.01pt);

\end{tikzpicture}
	\caption {Projection of the Bianchi type $\mathrm{II}$ heteroclinics in each hemisphere $\mathrm{II}_\alpha,\alpha=1,2,3,$ into the $\Sigma$-plane with $\alpha$-limits within int$(A_\alpha)$ and $\omega$-limits in $A_\alpha^c$.}\label{KC_3d_map}
\end{figure}
Thus, the heteroclinics in $\mathrm{II}_1$ induce a map from the $\alpha$-limit set to the $\omega$-limit set, denoted by $\KM_1:A_1 \rightarrow \overline{A_1^c}$. The map $\KM_1$ is a reflection along the $\Sigma_-$-axis given by the linear isometry
\begin{align}
\KM_1 \hspace{0.1cm} : \hspace{0.4cm} A_1 \hspace{0.5cm} &\longrightarrow \hspace{0.25cm}  \KC \setminus \overline{A_1} \\
\left(\Sigma_+, \Sigma_-\right) & \longmapsto 
\left( -\Sigma_+, \Sigma_-\right).
\end{align}
Analogous constructions in $\mathrm{II}_2,\mathrm{II}_3$ yield maps $\KM_2,\KM_3$.
Altogether,
they induce a map of the circle, 
called \emph{Kasner map} $\KM:\KC \rightarrow \KC $. 
Note that iterations of $\KM$ represent a heteroclinic chain formed by a sequence of Bianchi type $\mathrm{II}$ heteroclinic orbits.
%
Note that $\KM$ is multivalued in the set $\mathbf{A}$, 
whereas $\KM$ is uniquely determined in each arc $A_\alpha \setminus \mathbf{A}$, see Figure \ref{KC_0_map}.
The multivalued character for $v=0$ is reminiscent from the case $v\in (0,1/2)$, where $\KM$ can be formulated as a (non-hyperbolic discontinuous) skew-product dynamical system, see \cite{LappicyDaniel}.
\begin{figure}[H]
	\centering
	\begin{tikzpicture}[scale=1.35]

\draw [lightgray, dashed, - ]  (1.2,1)--(0,1);
\draw [lightgray, dashed, - ]  (1.2,-1)--(0,-1);

\draw [lightgray,dashed, - ]  (-1.4660,-0.5392)--(-0.8660,0.5);
\draw [lightgray,dashed, - ]  (0.2660,-1.5392)--(0.8660,-0.5);

\draw [dashed, -,  lightgray]  (-1.4660, 0.5392)--(-0.866 , -0.5);
\draw [dashed, - , lightgray]  (0.2660,1.5392)--(0.866, 0.5);

\draw [dashed, lightgray] ( 0,-1) -- (0,1) ;
\draw [dashed, lightgray] ( 0.866,-0.5) -- (-0.866,0.5) ;
\draw [dashed, lightgray] ( 0.866,0.5) -- (-0.866,-0.5) ;

\def \radius {1cm}
\draw (1,0) arc (0:360:1cm and 1cm);	 	

\node at ({60}:0.8cm) {\footnotesize $\mathbf{A}_{13}$};
\node at ({180}:0.75cm) {\footnotesize $\mathbf{A}_{23}$};
\node at ({-60}:0.75cm) {\footnotesize $\mathbf{A}_{12}$};

    \draw [ultra thick, domain=0:1.04,variable=\t,smooth] plot ({sin(\t r)},{cos(\t r)});
    \draw [rotate=120,ultra thick, domain=0:1.04,variable=\t,smooth] plot ({sin(\t r)},{cos(\t r)});
    \draw [rotate=240,ultra thick, domain=0:1.04,variable=\t,smooth] plot ({sin(\t r)},{cos(\t r)});

\draw  (0.95,0) circle (0pt) node [anchor=west]{\footnotesize $\A_1\setminus \mathbf{A}$};
\draw  (-0.6, -0.88) circle (0pt) node [anchor=north]{\footnotesize $\A_2\setminus \mathbf{A}$};
\draw  (-0.6, 0.85) circle (0pt) node [anchor=south]{\footnotesize $\A_3\setminus \mathbf{A}$};

	\end{tikzpicture}
	\hspace{-0.175cm}
	\begin{tikzpicture}[scale=1.35]
\draw [white]  (0.2660,-1.5392) circle (0.01pt);

    \draw [ultra thick, domain=0:1.04,variable=\t,smooth] plot ({sin(\t r)},{cos(\t r)});
    \draw [rotate=120,ultra thick, domain=0:1.04,variable=\t,smooth] plot ({sin(\t r)},{cos(\t r)});
    \draw [rotate=240,ultra thick, domain=0:1.04,variable=\t,smooth] plot ({sin(\t r)},{cos(\t r)});

\def \radius {1cm}
\draw (1,0) arc (0:360:1cm and 1cm);	 

\filldraw (0.55,0.9) circle (0.01 pt) node [anchor=west]  {$p$};
\draw (0.9928,0.11962) node[anchor=west] {};
\draw (-0.9928,0.11961524227) node[anchor=west] {};
\draw (-0.392829,-0.91961) node[anchor=west] {};

\draw[dotted, thick, postaction={decorate}] (0.6,0.8) -- (0.9928,0.1196);
\draw[dotted, thick, postaction={decorate}] (0.9928,0.1196) -- (-0.9928,0.1196);
\draw[dotted, thick, postaction={decorate}] (-0.9928,0.1196) -- (-0.392829,-0.91961);
\draw[dotted, thick, postaction={decorate}] (-0.392829,-0.91961) -- (0.6,0.8);

\draw[ dotted, thick, postaction={decorate}] (-0.9928,0.1196) -- (-0.6,0.8);
\draw[ dotted, thick, postaction={decorate}]  (0.6,0.8) -- (-0.6,0.8);
\draw[ dotted, thick, postaction={decorate}]  (-0.6,0.8) -- ( 0.39282,-0.91961);
\draw[ dotted, thick, postaction={decorate}]  ( 0.39282,-0.91961) -- (-0.392829,-0.91961);
\draw[ dotted, thick, postaction={decorate}]  ( 0.39282,-0.91961) -- (0.9928,0.11962);

\filldraw (1.2,1.2) circle (0 pt);
\filldraw (-1.2,1.2) circle (0 pt);
\filldraw (1.2,-1.2) circle (0 pt);
\filldraw (-1.2,-1.2) circle (0 pt);

	\end{tikzpicture}
	\quad
	\begin{tikzpicture}[scale=1.35]
\draw [white]  (0.2660,-1.5392) circle (0.01pt);

\def \radius {1cm}
\draw (1,0) arc (0:360:1cm and 1cm);	 	

    \draw [ultra thick, domain=0:1.04,variable=\t,smooth] plot ({sin(\t r)},{cos(\t r)});
    \draw [rotate=120,ultra thick, domain=0:1.04,variable=\t,smooth] plot ({sin(\t r)},{cos(\t r)});
    \draw [rotate=240,ultra thick, domain=0:1.04,variable=\t,smooth] plot ({sin(\t r)},{cos(\t r)});
    
\filldraw (0.866,0.5) circle (0.5 pt) node [anchor=west]  {};
\filldraw (-0.866,0.5) circle (0.5 pt)  node[anchor=west] {};
\filldraw (0,-1) circle (0.5 pt)  node[anchor=west] {};

\draw[dotted, thick, postaction={decorate}] (0.866,0.5) -- (-0.866,0.5) node [anchor = east] {\footnotesize $\tbc$};
\draw[dotted, thick, postaction={decorate}] (-0.866,0.5) -- (0,-1)node [anchor = north] {\footnotesize $\tab$};
\draw[dotted, thick, postaction={decorate}] (0,-1) -- (0.866,0.5)node [anchor = west] {\footnotesize $\tca$};

\filldraw (0.866,-0.5) circle (0.5 pt) node [anchor=west]  {};
\filldraw (-0.866,-0.5) circle (0.5 pt)  node[anchor=west] {};
\filldraw (0,1) circle (0.5 pt)  node[anchor=west] {};

\draw[dotted, thick, postaction={decorate}] (0.866,-0.5) -- (-0.866,-0.5) node [anchor = east] {\footnotesize $\tcb$};
\draw[dotted, thick, postaction={decorate}] (-0.866,-0.5) -- (0,1) node [anchor = south] {\footnotesize $\tac$};
\draw[dotted, thick, postaction={decorate}] (0,1) -- (0.866,-0.5) node [anchor = west] {\footnotesize $\tba$};

\end{tikzpicture}
    \,
	\begin{tikzpicture}[scale=1.35]
\draw [white]  (0.2660,-1.5392) circle (0.01pt);

    \draw [ultra thick, domain=0:1.04,variable=\t,smooth] plot ({sin(\t r)},{cos(\t r)});
    \draw [rotate=120,ultra thick, domain=0:1.04,variable=\t,smooth] plot ({sin(\t r)},{cos(\t r)});
    \draw [rotate=240,ultra thick, domain=0:1.04,variable=\t,smooth] plot ({sin(\t r)},{cos(\t r)});

\def \radius {1cm}
\draw (1,0) arc (0:360:1cm and 1cm);	 

\draw[dotted, thick, postaction={decorate}] (1,0) -- (-1,0);
\draw[dotted, thick, postaction={decorate}] (-1,0) -- (-0.5,0.866);
\draw[dotted, thick, postaction={decorate}] (-0.5,0.866) -- (0.5,-0.866);
\draw[dotted, thick, postaction={decorate}] (0.5,-0.866) -- (1,0);

\filldraw (1.2,1.2) circle (0 pt);
\filldraw (-1.2,1.2) circle (0 pt);
\filldraw (1.2,-1.2) circle (0 pt);
\filldraw (-1.2,-1.2) circle (0 pt);

\filldraw [black] (-1,0) circle (0.7pt) node[anchor=east] {\footnotesize $\T_1$};

	\end{tikzpicture}
	\caption {Left: All Kasner maps put together, where the map is multivalued in each (bold) region $\mathbf{A}$. 
	Middle-left: Example of all possible iterates of $p\in\mathbf{A}_{13}$, which returns to $p$ after a (multivalued) excursion.
	Middle-right: The tangential points (i.e., the boundary of the bold region) form two period three orbits. The existence of heteroclinic chains of period three was shown in \cite{Kamenshchik} using the Kasner parameter $u\in [1,\infty]$.
	Right: An example of a period four, which contains the Taub point $\mathrm{T}_1$. } \label{KC_0_map}
\end{figure}
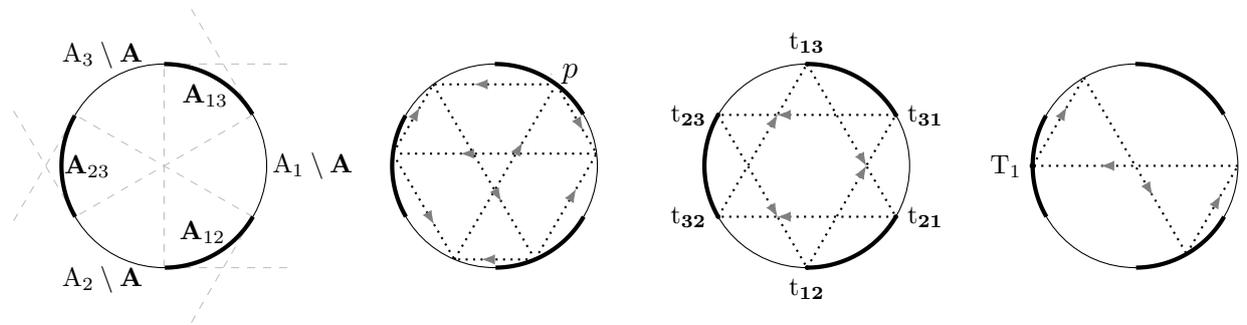
%
%

Note that there are different orbits starting at the point $p\in \mathbf{A}_{13}$ in Figure \ref{KC_0_map} (middle-left). First, there is an orbit of period four through the maps $\KM_2\circ \KM_3\circ \KM_1\circ \KM_3$. Second, there is a period six orbit through the maps $\KM_2 \circ \KM_1 \circ \KM_3 \circ \KM_2 \circ \KM_1 \circ \KM_3$. 
These orbits are not unique, i.e., there are other orbits of period four and six starting at $p$ through a different combination of maps.
Thus, tracking different combinations of period four and six orbits generate periodic orbits of any period $4j+6k$ for $j,k \in \NN$. 
Moreover, there are also several aperiodic chains by alternating among the orbits of period four and six while increasing the number of appearances of each chain (i.e., the first chain appears once, the second appears twice, the first chain appears three times, the second appears four times, etc).
%
Therefore, each point $p\in \KC$ generates a plethora of multivalued dynamical possibilities.
Next, we describe the orbit structure and discrete dynamics of the map $\KM$.
\begin{lem} \label{Lemma_zero}
Each orbit of the Kasner map $\KM$ contains at most six different points such that each of the six arcs of $\KC$ 
contains at most one point. 
Moreover, 
	\begin{itemize}
		\item[(i)] An orbit with exactly three points consists of a heteroclinic chain of minimal period three which is composed solely of tangential points.
		\item[(ii)]  An orbit with exactly four points consists of heteroclinic chains of minimal period four.
		\item[(iii)] There are no orbits with only five points.
		\item[(iv)] 
		Consider a point in $\KC$ that has six points in its orbit. Thus, starting at this point, there are periodic orbits of any even period $2n \ge 4,n\in\mathbb{N},$ and aperiodic orbits.
	\end{itemize}
\end{lem}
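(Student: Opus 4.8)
The plan is to expose a hidden dihedral symmetry that collapses the a priori complicated multivalued dynamics of $\KM$ onto a fixed finite directed graph, after which all four items become elementary combinatorics.

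First I would extend each partial reflection $\KM_\alpha$ to a genuine isometry $\sigma_\alpha$ of $\KC$: $\sigma_1$ is the reflection in the $\Sigma_-$-axis and $\sigma_2,\sigma_3$ are its images under the cyclic ($\ZZ_3$) equivariance of~\eqref{full:subsv=0}. The three mirror lines pass through the pairs of endpoints of $A_1,A_2,A_3$ and sit at $60^\circ$ to one another, so $G:=\langle\sigma_1,\sigma_2,\sigma_3\rangle$ is the dihedral group of order six. The six tangential points (the endpoints of $A_1,A_2,A_3$, i.e.\ $\partial\mathbf{A}$) cut $\KC$ into six arcs --- the three bold arcs $\mathbf{A_{\alpha\beta}}$ and the three arcs $A_\alpha\setminus\mathbf{A}$ --- and $G$ permutes these six arcs; since the action is transitive while $|G|=6$, it is simply transitive. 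Because every value of $\KM$ at a point $p$ is of the form $\sigma_\alpha(p)$, the whole forward $\KM$-orbit of $p$ is contained in the single set $G\cdot p$, which meets each arc in exactly one point; this is the opening assertion of the lemma. A stabiliser computation shows $|G\cdot p|=6$ unless $p$ is a tangential point (fixed by one $\sigma_\alpha$), in which case $|G\cdot p|=3$ and the orbit is one of the two tangential $3$-cycles: this is item~(i).

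Next I would record the finite directed graph $\Gamma$ on six vertices $q_1,\dots,q_6$, one in each arc, with an edge $q_i\to q_j$ iff $q_j\in\KM(q_i)$. Each arc $A_\alpha\setminus\mathbf{A}$ lies in a single $A_\alpha$, so its vertex has out-degree one (a forced step), while each bold arc lies in two of the $A_\beta$, so its vertex has out-degree two. Writing out the nine edges, one checks that $\Gamma$ is strongly connected, has no loops and no $2$-cycles, and its only $3$-cycles are the two tangential triangles. Deleting any one vertex of $\Gamma$ either leaves an in-degree-one vertex with no incoming edge or creates a sink; hence every cycle of $\Gamma$ uses at most four vertices --- item~(iii) --- and a closed walk visiting exactly four vertices must traverse one of the three simple $4$-cycles, hence has minimal period four --- item~(ii). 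Strong connectedness also shows that a point whose orbit has six points sits in some vertex of $\Gamma$ and realises all of $\Gamma$, so $\Gamma$ is precisely the object governing item~(iv).

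For item~(iv) I would contract the three forced edges: the three bold-arc vertices are the ``decision vertices'', and the reduced graph $\bar\Gamma$ on them, in which each edge represents two $\KM$-steps, is the complete digraph on three vertices. A closed $\KM$-orbit of period $2\ell$ through a decision vertex is exactly a closed walk of length $\ell$ in $\bar\Gamma$. Through any fixed vertex $\bar\Gamma$ carries two distinct $2$-loops and a $3$-cycle; concatenating copies of a $2$-loop with at most one occurrence of the other $2$-loop (even $\ell$) or of the $3$-cycle (odd $\ell$) yields, for every $\ell\ge 2$, a closed walk of length $\ell$ in which some vertex is visited exactly once per period, so its minimal period equals $\ell$. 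This gives $\KM$-periodic orbits of every even period $2n\ge 4$ through any decision vertex; and if $p$ lies on one of the three single-valued arcs instead, the unique incoming edge of its successor decision vertex is the forced edge through $p$, so $p$ itself lies on such a periodic orbit. Aperiodic orbits come for free: from a decision vertex the infinite walks of $\bar\Gamma$ are in bijection with binary choice sequences (the two out-edges at each vertex are distinguishable, and the choice is recovered from two consecutive vertices), so any non-eventually-periodic choice sequence gives a non-eventually-periodic orbit, prepending the forced step when $p$ is on a single-valued arc.

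The conceptual heart --- and the only real obstacle --- is recognising the $D_3$-symmetry and its simply transitive action on the six arcs; once the multivalued dynamics is replaced by the six-vertex graph $\Gamma$ and its contraction $\bar\Gamma$, everything is routine. The one point still requiring care is the minimal-period bookkeeping in item~(iv), i.e.\ checking that the concatenated closed walks are not disguised multiple traversals of a shorter walk; this is immediate once the concatenation is arranged so that some vertex is visited exactly once per period.
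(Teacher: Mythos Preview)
Your approach via the dihedral group $D_3=\langle\sigma_1,\sigma_2,\sigma_3\rangle$ and the six-vertex transition digraph $\Gamma$ is correct and is a genuinely different, more structural, route than the paper's. The paper does not name $D_3$ or $\Gamma$ explicitly: it argues that each $\KM_\alpha$ is a bijection between arcs, uses pigeonhole to force a first arc-return within six steps, observes that the composite (an even product of reflections) fixing an arc setwise must be the identity, and for (iv) represents every even integer $\ge 4$ as $4j+6k$. Your orbit--stabiliser argument gives the ``at most six, one per arc'' statement and item~(i) in one stroke, the bipartite graph $\Gamma$ makes (ii)--(iii) transparent, and the contraction $\bar\Gamma$ (the complete digraph on the three bold arcs) handles the minimal-period bookkeeping in~(iv) more cleanly than the paper's $4j+6k$ count, while the binary-choice coding of walks in $\bar\Gamma$ gives aperiodic chains more directly than the paper's ``increasing repetitions'' construction.

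There is one genuine slip. From ``deleting any one vertex of $\Gamma$ creates a source or a sink'' you conclude ``every cycle of $\Gamma$ uses at most four vertices.'' That is false: $\Gamma$ has Hamiltonian $6$-cycles, e.g.\ (in the obvious labelling) $a_1\to b_{23}\to a_2\to b_{13}\to a_3\to b_{12}\to a_1$. What your deletion argument actually proves is that any cycle omitting \emph{some} vertex must omit a second one, hence no cycle uses exactly five vertices --- which is precisely item~(iii). Relatedly, the remark that ``its only $3$-cycles are the two tangential triangles'' is off: $\Gamma$ is bipartite (every edge joins a bold-arc vertex to a single-valued-arc vertex), so it has no $3$-cycles at all; the tangential triangles are separate $3$-point $\KM$-orbits, not cycles of $\Gamma$. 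Both points are easy to repair and do not affect the rest of your argument.
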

\begin{proof}

Note that the restricted map $\KM|_{A_\alpha \setminus \mathbf{A}}$ is a bijection in each of the six arcs of $\KC$, in contrast to the case $v>0$.
Moreover, it takes at least four iterates of $\KM$ to re-enter the same region of the six arcs again, except for the tangential points which consist of the boundary of the multivalued region $\mathbf{A}$. This proves $(i)$ and $(ii)$. 

Next, we prove that any point in the circle has at most six points in its orbit, where the points lie in different arcs.
After at most six iterations of $\KM$, one of the six arcs is revisited by the pigeonhole principle. Moreover, between the first and the second appearances of such a revisited arc, the map $\KM$ is iterated by an even number of times $n\leq 6$. 
Denote such iterate by $\KM^n:= \KM_{\omega_n} \circ ... \circ \KM_{\omega_1}$ for some $\omega_k\in \{1,2,3\}$ where $k=1,\ldots,n$. Note that $\KM^n$ is the identity map, since each iterate of $\KM$ is an isometry which is a one-to-one correspondence between two arcs.
Hence $\KM^n$ amounts to a rotation of each arc, which proves the claim. 

In case that an orbit has five points, note that they must be in different arcs. 
By symmetry, either the image or pre-image of these five points must also contain a sixth point in the remaining arc (that had no point from the original five arcs). 
This proves $(iii)$.

Lastly, consider an orbit with six points.
Due to symmetry, there are three points in the multivalued region $\mathbf{A}$ 
and three points in the single valued region. 
Thus, all orbits have the same structure as in Figure \ref{KC_0_map} (middle-left).
In particular, there are periodic orbits of any period $4j+6k$ for $j,k \in \NN$ by concatenating different period four and six orbits. 
Next, we show that this combination is able to generate periodic orbits of any period $2n\geq 4$, i.e., for any $n\in\mathbb{N}$, there are $k=k(n),j=j(n)$ such that $n=2j+3k$. There are two cases depending on the parity of $n$.
Indeed, if $n$ is even (resp. odd), then choosing $k$ even (resp. odd) implies that $n-3k$ is even in both cases. Thus, for any $k\geq 0$
with same parity as $n$ such that $n-3k\geq 0$, one can define $j:=(n-3k)/2$. 
Aperiodic chains occur by alternating orbits of period four and six while increasing the number of appearances of each orbit.
\end{proof}

%
%

Recall that the map $\KM$ has different qualitative regimes depending on $v$.
For $v=1/2$, the map $\KM$ is known to be generically chaotic, see~\cite{bkl70,khaetal85,ugg13a,ugg13b} and references therein. 
For $v\in (1/2,1)$, the map $\KM$ is chaotic in a Cantor set of measure zero, see \cite{HellLappicyUggla}. 
For $v\in(0,1/2)$, the (multivalued) map $\KM$ is chaotic, see \cite{LappicyDaniel}. Thus, $v=1/2$ corresponds to a bifurcation from non-generic to generic chaos. The case $v=0$ also corresponds to a bifurcation. However, the map $\KM$ cannot be chaotic for $v=0$, since it is given by a sequence of linear isometries. Indeed, there is no sensitivity to initial conditions, since any open set $U\subseteq \KC$ has six sets of same length as possible iterates. For the same reason, there is no topological mixing. The only property of chaos that holds for $v=0$ is the density of periodic orbits. 

\section{Bianchi type $\mathrm{VI}_0$ and $\mathrm{VII}_0$}\label{sec:typeVIandVII}
To obtain the equations for the type $\mathrm{VI}_0$ and $\mathrm{VII}_0$ models we set, without loss of generality, 
$N_1=0$, $N_2>0$, $N_3<0$ for type $\mathrm{VI}_0$,
and $N_1=0$, $N_2>0$, $N_3>0$ for type $\mathrm{VII}_0$,%
\begin{subequations}\label{dynsyslambdaRVIVII}
\begin{align}
\Sigma_+^\prime &= 2(1-\Sigma_+^2 - \Sigma_-^2),\label{SpVIVII}\\
\Sigma_-^\prime &= 2\sqrt{3}(N_2^2 - N_3^2),\label{SmVIVII}\\
N_2^\prime &= -2(\Sigma_+ + \sqrt{3}\Sigma_-)N_2,\label{N2primeVIVII}\\
N_3^\prime &= -2(\Sigma_+ - \sqrt{3}\Sigma_-)N_3,
\end{align}
\end{subequations}
and the constraint $1 - \Sigma_+^2 - \Sigma_-^2 - (N_2 - N_3)^2 =0$.
%
%
%
%

Due to the constraint, the state spaces for the
type $\mathrm{VI}_0$ and $\mathrm{VII}_0$ models with $N_1=0$ are
3-dimensional with a 2-dimensional boundary given by the union of the
invariant type $\mathrm{II}_2$, $\mathrm{II}_3$ and $\mathrm{K}^\ocircle$
sets. 
The type $\mathrm{VI}_0$ and $\mathrm{VII}_0$ models also share the region $A_2\cap A_3\subseteq \mathrm{K}^\ocircle$ in its boundary 
such that $N_2,N_3$ are unstable in $\mathrm{int}(A_2\cap A_3)$.
The stable set in $\mathrm{K}^\ocircle$ is
given by $S_{\mathrm{VI}_0,\mathrm{VII}_0} := \mathrm{K}^\ocircle\backslash
\mathrm{int}(A_2 \cup A_3)$.
See Figure~\ref{KC_VI_VII}.
Note that the type $\mathrm{VI}_0$ has a relatively compact state space,
whereas type $\mathrm{VII}_0$ has an unbounded one.
Indeed, the constraint implies that $\Sigma_+^2 + \Sigma_-^2\leq 1$.
For type $\mathrm{VI}_0$, note that $(N_2-N_3)^2= N_2^2 + N_3^2 + 2|N_2N_3|$, and
the constraint yields $N_2^2 \leq 1 -\Sigma^2$ and
$N_3^2 \leq 1 -\Sigma^2$, where the equalities hold for the $\mathrm{II}_2$
and $\mathrm{II}_3$ boundary sets, respectively. For type $\mathrm{VII}_0$,
introducing $N_\pm := N_2\pm N_3$ imply that the
constraint can be written as $\Sigma^2 + N_-^2 = 1$,
and thus $\Sigma_\pm$ and $N_-$ are bounded, while $N_+$ is unbounded.
\vspace{-0.25cm}
\begin{figure}[H]
	\centering
	\begin{tikzpicture}[scale=1.35]


\draw [lightgray,dashed, - ]  (-1.4660,-0.5392)--(-0.8660,0.5);
\draw [lightgray,dashed, - ]  (0.2660,-1.5392)--(0.8660,-0.5);

\draw [dashed, -,  lightgray]  (-1.4660, 0.5392)--(-0.866 , -0.5);
\draw [dashed, - , lightgray]  (0.2660,1.5392)--(0.866, 0.5);


\def \radius {1cm}
\draw (1,0) arc (0:360:1cm and 1cm);	 	

\node at ({180}:0.5cm) {\footnotesize $A_2\cap A_3$};

    \draw [rotate=120,ultra thick, domain=0:1.04,variable=\t,smooth] plot ({sin(\t r)},{cos(\t r)});
    
    \draw [rotate=300,ultra thick, dotted, white, domain=0:1.04,variable=\t,smooth] plot ({sin(\t r)},{cos(\t r)});

\draw  (0.95,0) circle (0pt) node [anchor=west]{\footnotesize $S_{\mathrm{VI}_0,\mathrm{VII}_0}$};
\draw  (-0.6, -0.875) circle (0pt) node [anchor=north]{\footnotesize $\A_2\setminus A_2\cap A_3$};
\draw  (-0.6, 0.875) circle (0pt) node [anchor=south]{\footnotesize $\A_3\setminus A_2\cap A_3$};

\draw  (0,-2.375) circle (0pt);

	\end{tikzpicture} 
	\hspace{-0.45cm}
	\includegraphics[scale=0.15, trim = 1cm 0cm 1cm 0cm, clip]{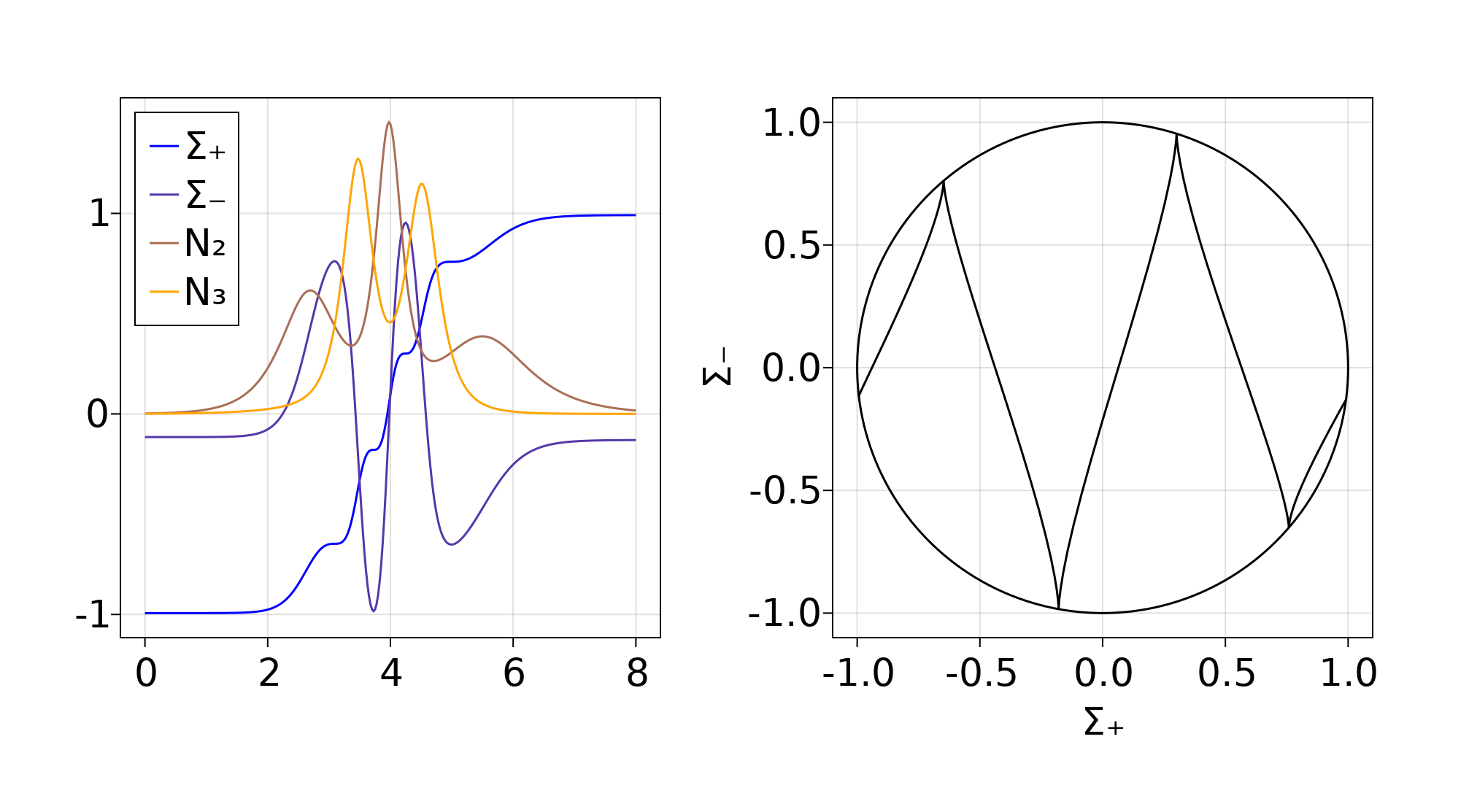}




    

\vspace{-0.25cm}
	\caption {Left: The stability of $\KC$ in type $\mathrm{VI}_0,\mathrm{VII}_0$. The (bold) set $A_2\cap A_3$ has two unstable eigenvalues, 
	$A_2\setminus A_2\cap A_3$ and $A_3\setminus A_2\cap A_3$ have one unstable eigenvalue, 
	and the (dashed) set $S_{\mathrm{VI}_0,\mathrm{VII}_0}$ is stable. 
	Middle: An example of a type $\mathrm{VII}_0$ heteroclinic with initial data $(\Sigma_+,\Sigma_-,N_2,N_3)=(-0.993,-0.115,0.002,0.001)$.
	Right: The projection of such example onto the $\Sigma$-plane. 
	} \label{KC_VI_VII}
\end{figure}
Next, we prove that the solutions of \eqref{dynsyslambdaRVIVII} are heteroclinics and we describe their $\alpha,\omega$-limit sets, following the lines of \cite{HellLappicyUggla}.
\begin{prop}\label{lem:genVIalpha}
In Bianchi type $\mathrm{VI}_0$ the $\alpha$-limit set for all orbits resides in $A_2\cap A_3\subseteq \mathrm{K}^\ocircle$.
The $\omega$-limit set for all orbits resides in the set $S_{\mathrm{VI}_0}$.
\end{prop}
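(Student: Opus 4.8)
The plan is to use the strict monotonicity of $\Sigma_+$ to confine the limit sets to $\KC$, and then to read off which arcs of $\KC$ they occupy from the exponential rates of $N_2$ and $N_3$.

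First I would note that, in Bianchi type $\mathrm{VI}_0$ (where $N_1=0$), the signs $N_2>0$ and $N_3<0$ are preserved by \eqref{dynsyslambdaRVIVII}, since $N_2$ and $N_3$ each obey a linear homogeneous equation in that variable; in particular $N_2-N_3>0$ along every orbit. Using the constraint $1-\Sigma^2=(N_2-N_3)^2$, equation \eqref{SpVIVII} reads $\Sigma_+^\prime=2(N_2-N_3)^2>0$, so $\Sigma_+$ is strictly increasing. Since the type $\mathrm{VI}_0$ state space is relatively compact ($\Sigma^2\le1$ and $N_2^2,N_3^2\le1-\Sigma^2$), the flow is complete and every orbit bounded; hence $\Sigma_+$ converges to limits $\Sigma_+^\pm$ as $\tau\to\pm\infty$, and the $\alpha$- and $\omega$-limit sets are nonempty, compact, connected and invariant. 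On either limit set $\Sigma_+$ is constant, so $\Sigma_+^\prime\equiv0$ there, i.e.\ $(N_2-N_3)^2\equiv0$; since $N_2\ge0\ge N_3$ on the closure this forces $N_2=N_3=0$, and both limit sets lie in $\KC$. A connected subset of $\KC\cap\{\Sigma_+=\mathrm{const}\}$ --- a set of at most two points --- is a single equilibrium; call these $q$ (the $\alpha$-limit) and $p$ (the $\omega$-limit). In particular the orbit tends to $q$ as $\tau\to-\infty$ and to $p$ as $\tau\to+\infty$, so $N_2,N_3\to0$ at both ends.

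Next I would localise $q$ and $p$. Along the orbit $(\ln N_2)^\prime=-2(\Sigma_++\sqrt3\,\Sigma_-)$, which tends to $-2(\Sigma_+(q)+\sqrt3\,\Sigma_-(q))$ as $\tau\to-\infty$; if this limit were negative, integration would force $\ln N_2\to+\infty$, contradicting $N_2\le1$, so $\Sigma_+(q)+\sqrt3\,\Sigma_-(q)\le0$, i.e.\ $q\in A_2$. The same estimate applied to $(\ln|N_3|)^\prime=-2(\Sigma_+-\sqrt3\,\Sigma_-)$ gives $q\in A_3$, hence $q\in A_2\cap A_3$. At the $\omega$-end, $N_2\to0$ forces $\ln N_2\to-\infty$; if $-2(\Sigma_+(p)+\sqrt3\,\Sigma_-(p))$ were positive this would be impossible, so $\Sigma_+(p)+\sqrt3\,\Sigma_-(p)\ge0$, i.e.\ $p\notin\mathrm{int}(A_2)$, and likewise $p\notin\mathrm{int}(A_3)$. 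Since the arcs $A_2$ and $A_3$ overlap, $\mathrm{int}(A_2)\cup\mathrm{int}(A_3)=\mathrm{int}(A_2\cup A_3)$, so $p\in\KC\setminus\mathrm{int}(A_2\cup A_3)=S_{\mathrm{VI}_0}$, as required. Along the way I would first confirm, from the $120^\circ$-rotation symmetry relating $N_1,N_2,N_3$ and the definition \eqref{A_1} of $A_1$, that $A_2=\{\Sigma_++\sqrt3\,\Sigma_-\le0\}\cap\KC$ and $A_3=\{\Sigma_+-\sqrt3\,\Sigma_-\le0\}\cap\KC$.

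The one genuinely delicate step is the reduction of the limit sets to single equilibria on $\KC$: it combines the strict monotonicity of $\Sigma_+$ (which places each limit set inside a level set of $\Sigma_+$), the compactness of the state space (which makes the limit sets nonempty, connected and invariant), and the constraint (which turns $\Sigma_+^\prime=0$ into $N_2=N_3$, collapsing onto $\KC$). Everything afterwards is elementary exponential-rate bookkeeping for $N_2$ and $N_3$.
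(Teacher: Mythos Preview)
Your proof is correct and follows the same strategy as the paper: monotonicity of $\Sigma_+$ together with relative compactness forces both limit sets onto $\KC$, after which the linear stability of $N_2,N_3$ at $\KC$ pins down the arcs. The paper's proof simply invokes ``the stability properties of $\mathrm{K}^\ocircle$'' for the last step, whereas you carry it out explicitly via the asymptotic rates of $\ln N_2$ and $\ln |N_3|$; your additional care in arguing that each limit set is a single equilibrium (using connectedness and the level set $\{\Sigma_+=\mathrm{const}\}$) is a nice touch that the paper leaves implicit.
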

\begin{proof}
All type $\mathrm{VI}_0$ orbits satisfy $\Sigma^2 < 1$, and thereby $|\Sigma_+|<1$, while
$\Sigma^2 = 1$ corresponds to $\mathrm{K}^\ocircle$,
since the constraint yields $(N_2-N_3)^2=0$, 
and thus $N_2=N_3=0$.

Equation \eqref{dynsyslambdaRVIVII} implies that $\Sigma_+$ is monotonically increasing, except when $\Sigma^2=1$. Thus $\lim_{\tau\rightarrow\pm\infty}\Sigma^2 = 1$, and
hence $\lim_{\tau\rightarrow\pm\infty}(N_2,N_3) = (0,0)$, due to the constraints.
Therefore both the $\alpha$- and $\omega$-limit sets for all type $\mathrm{VI}_0$
orbits belong to the set $\mathrm{K}^\ocircle$. It then follows from the stability
properties of $\mathrm{K}^\ocircle$ that the $\alpha$-limit set (resp. $\omega$-limit set) for these orbits
resides in the set $A_2\cap A_3$ (resp. $S_{\mathrm{VI_0}}$).
\end{proof}

Let us now turn to type $\mathrm{VII}_0$, but before presenting asymptotic results
we first consider the locally rotationally symmetric (LRS) type $\mathrm{VII}_0$ subset. This invariant set
is given by $N_-=0$ and $\Sigma_-=0$, where the constraint
divides the LRS subset into two disjoint invariant sets consisting of the two
lines at $\Sigma_+ = 1$ and $\Sigma_+ = -1$, i.e.,
\begin{subequations}\label{LRS}
\begin{align}
\mathrm{LRS}^\pm &:= 
\left\{ (\Sigma_+,0,N_2,N_3) \in \mathbb{R}^4 \Bigm|
\begin{array}{c}
\,\, \Sigma_+=\pm 1, \\
\,\, N_2=N_3\neq 0
\end{array}
\right\},
\end{align}
\end{subequations}
where the superscript of $\mathrm{LRS}^\pm$ is determined by the sign of $\Sigma_+$.
Let $N := N_2 = N_3>0$. Then the flow on the $\mathrm{LRS}^\pm$ subsets is determined by
\begin{equation}\label{Neqn}
N^\prime = -2\Sigma_+ N, \qquad \Sigma_+ = \pm 1.
\end{equation}
On $\mathrm{LRS}^+$, the variable $N\in (0,\infty)$
decreases from  $\lim_{\tau\rightarrow -\infty}N =\infty$  to $0$, and hence
the orbit in the invariant line ends at $\mathrm{Q}_1\in \mathrm{K}^\ocircle$. On $\mathrm{LRS}^-$, there is an orbit that emanates from
$\mathrm{T}_1$, where $N\in (0,\infty)$ subsequently increases,
which results in $\lim_{\tau\rightarrow\infty}N = \infty$.
%
%
%
%
\begin{prop}\label{lem:genVIVIIomega}
In Bianchi type $\mathrm{VII}_0$ 
the $\alpha$-limit set for all orbits reside in $A_2\cap A_3 \subseteq \mathrm{K}^\ocircle$, apart from the $\mathrm{LRS}^+$ set which is an
orbit such that $\lim_{\tau\rightarrow-\infty} N = \infty$, where $N:=N_2=N_3$.
The $\omega$-limit set for all orbits resides in the stable set $S_{\mathrm{VII}_0}\subseteq \mathrm{K}^\ocircle$, apart from the $\mathrm{LRS}^-$ set which is an orbit such that $\lim_{\tau\rightarrow\infty} N = \infty$.
\end{prop}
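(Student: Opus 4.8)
The plan is to follow the line of Proposition~\ref{lem:genVIalpha}, the one genuinely new feature being that the type $\mathrm{VII}_0$ state space is noncompact in the direction $N_+ = N_2 + N_3$. I would first isolate the exceptional orbits: by \eqref{Neqn} the set $\mathrm{LRS}^+$ (resp.\ $\mathrm{LRS}^-$) is a single orbit along which $N := N_2 = N_3$ tends to $\infty$ as $\tau \to -\infty$ (resp.\ $\tau \to +\infty$), so these are exactly the claimed exceptions. For every other orbit the basic tools are: (i) by \eqref{SpVIVII} and the constraint, $\Sigma_+' = 2(1-\Sigma^2) = 2N_-^2 \ge 0$, with equality only at isolated times (otherwise the orbit lies in the invariant LRS subset), so $\Sigma_+$ is strictly increasing and, bounded by the constraint, has limits $\Sigma_+^\mp \in [-1,1]$ as $\tau \to \mp\infty$, with $|\Sigma_+^\mp| = 1$ forcing the orbit onto $\mathrm{LRS}^\mp$; (ii) the constraint confines the orbit to the invariant $2$-sphere $\Sigma_+^2 + \Sigma_-^2 + N_-^2 = 1$, on which $N_+ > 0$ is the only noncompact coordinate; and (iii) $(N_2 N_3)' = -4\,\Sigma_+\, N_2 N_3$.

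The soft part of the argument is then standard. As $\Sigma_+$ is monotone, any $\alpha$- or $\omega$-limit set lies in the level set $\{\Sigma_+ = \Sigma_+^\mp\}$ and, being invariant, in the largest invariant subset of $\{\Sigma_+' = 0\} = \{N_- = 0\}$, which one checks equals $\KC \cup \mathrm{LRS}^+ \cup \mathrm{LRS}^-$. If this limit set is bounded it cannot meet the invariant noncompact orbits $\mathrm{LRS}^\pm$, hence lies on $\KC$, and connectedness together with $\Sigma_+ \equiv \Sigma_+^\mp$ collapses it to a single equilibrium $p$. Since $N_2, N_3 > 0$ tend to $0$ at $p$, the point $p$ can lie in neither $\mathrm{int}(A_2)$ nor $\mathrm{int}(A_3)$ — an unstable variable of fixed sign cannot decay to zero — so the $\omega$-limit lies in $S_{\mathrm{VII}_0} = \KC \setminus \mathrm{int}(A_2 \cup A_3)$ and, running the same argument backwards, the $\alpha$-limit lies in $A_2 \cap A_3$.

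It remains to show that, for a non-LRS orbit, the relevant limit set is bounded, i.e.\ that $N_+$ stays bounded. When $\Sigma_+^+ > 0$ this is immediate: eventually $\Sigma_+ \ge \varepsilon > 0$, so by (iii) $N_2 N_3 \to 0$, hence $N_+ = (N_-^2 + 4N_2N_3)^{1/2}$ is bounded, and in fact $N_2, N_3, N_- \to 0$ and $\Sigma_-$ converges, so the orbit limits onto a point of $\KC$; the case $\tau \to -\infty$ with $\Sigma_+^- < 0$ is symmetric, and once boundedness is known the borderline possibilities ($\Sigma_+^+ \le 0$, or $\Sigma_+^- \ge 0$) are incompatible with the soft step because the resulting limit point would fail the sign conditions defining $S_{\mathrm{VII}_0}$ resp.\ $A_2 \cap A_3$. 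So the crux — and the main obstacle — is to exclude $\Sigma_+^+ \le 0$, equivalently $\Sigma_+ \le 0$ for all $\tau$, for a non-LRS orbit, and symmetrically at $\tau \to -\infty$. In that regime (iii) gives $(N_2 N_3)' \ge 0$, so $N_+$ is bounded below by a positive constant; on the $2$-sphere the $(\Sigma_-, N_-)$-block, after the time change $ds = N_+ \, d\tau$, reads $\dot\Sigma_- = 2\sqrt3\, N_-$, $\dot N_- = -2\sqrt3\, \Sigma_- + O(N_+^{-1})$, i.e.\ a rotation of the circle $\Sigma_-^2 + N_-^2 = 1 - \Sigma_+^2$ whose radius is bounded away from $0$ (here the value $\Sigma_+^+ = -1$ is already excluded, as it forces $\mathrm{LRS}^-$). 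Averaging $\Sigma_+' = 2N_-^2$ over one revolution then shows that the increase of $\Sigma_+$ across a revolution is at least a fixed positive multiple of the $\tau$-time it consumes; since those $\tau$-intervals exhaust a half-line, the total increase of $\Sigma_+$ diverges, contradicting $\Sigma_+ \le 1$. The technical heart is to make this averaging rigorous — checking that $N_+$ and the circle radius vary slowly enough over a single revolution that the motion really is a perturbed rotation, both when $N_+ \to \infty$ and when $N_+$ remains bounded — after which the soft step above completes the proof.
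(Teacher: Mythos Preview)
Your overall strategy coincides with the paper's: handle the $\mathrm{LRS}^\pm$ orbits separately via \eqref{Neqn}, use the monotonicity $\Sigma_+' = 2N_-^2 \ge 0$, then exploit $(N_2N_3)' = -4\Sigma_+ N_2N_3$ (the paper phrases this via $Z_{\mathrm{sub}} := 1/|N_2N_3|$) to force $N_2N_3\to 0$ once $\Sigma_+>0$, and finish with the local stability of $\mathrm{K}^\ocircle$. The difference is in the crucial intermediate step. The paper records the inflection-point computation $\Sigma_+'''|_{N_-=0} = 24(N_2+N_3)^2\Sigma_-^2$ and from this alone asserts that every non-$\mathrm{LRS}^-$ orbit eventually enters the invariant region $\{\Sigma_+>0\}$; no further argument is given for why $\Sigma_+$ cannot increase monotonically to a limit $c\le 0$.

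You correctly isolate this as the genuine obstacle and supply an additional mechanism: after the time change $ds = N_+\,d\tau$ the $(\Sigma_-,N_-)$ block is a perturbed rotation on a circle of radius $\sqrt{1-\Sigma_+^2}$ bounded away from zero, and averaging $\Sigma_+' = 2N_-^2$ over a revolution gives a definite increment of $\Sigma_+$ per unit $\tau$-time, contradicting $\Sigma_+\le 0$. This is a real addition to the paper's argument rather than a reformulation of it. Two remarks on your sketch: first, the bounded-$N_+$ subcase does not need averaging at all, since then the orbit is precompact, the $\omega$-limit is a single point of $\mathrm{K}^\ocircle$ with $\Sigma_+ = c\le 0$, and such a point lies in $\mathrm{int}(A_2)\cup\mathrm{int}(A_3)$, giving the same local-stability contradiction you already use; so the averaging is only needed when $N_+\to\infty$, which makes the ``perturbed rotation'' picture cleaner. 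Second, your sentence about ``borderline possibilities'' being ``incompatible with the soft step'' is a bit circular as written, but the logic you actually execute (direct argument for $\Sigma_+^+>0$, averaging to exclude $\Sigma_+^+\le 0$) is sound.
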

\begin{proof}
The exceptions follow from the previous analysis of the LRS type $\mathrm{VII}_0$ subset, due to~\eqref{Neqn}. Consider therefore type
$\mathrm{VII}_0$ non-LRS orbits, i.e., orbits for which $\Sigma_-^2 + N_-^2>0$
and thereby $|\Sigma_+|<1$ due to the constraint. Note that in contrast to the type $\mathrm{VII}_0$ unbounded state space, its boundary is given by the compact set $\mathrm{II}_2\cup\mathrm{II}_3\cup\mathrm{K}^\ocircle$.

First we prove the result for $\omega$-limit sets.
Note that 
\begin{equation}\label{monp}
\Sigma_+^\prime = 2N_-^2,\qquad
\Sigma_+''|_{N_-=0} = 0,\qquad
\Sigma_+'''|_{N_-=0} = 24(N_2+N_3)^2\Sigma_-^2.
\end{equation}
Thus $\Sigma_+$ is increasing for all non-LRS orbits (i.e., orbits such that $\Sigma_-^2 + N_-^2>0$), except when $N_-=0$ (and thereby
$\Sigma_-\neq 0$), which corresponds to an inflection point in the growth of the positive quantity $\Sigma_+$, due to~\eqref{monp}. Thus all non-$\mathrm{LRS}^-$ orbits eventually enter the (positively) invariant set
$\Sigma_+>0$. 

Moreover, the Lyapunov function $Z_{\mathrm{sub}} := 1/|N_2N_3|>0$ satisfy $Z_{\mathrm{sub}}^\prime = 4 \Sigma_+ Z_{\mathrm{sub}}$. 
Thus, $Z_{\mathrm{sub}}>0$ is monotonically increasing in the invariant set $\Sigma_+>0$, for all non-$\mathrm{LRS}$ orbits. It follows that $\lim_{\tau\rightarrow\infty}Z_{\mathrm{sub}} = \infty$ and thereby
$\lim_{\tau\rightarrow\infty}N_2N_3 = 0$. Thus the $\omega$-limit set of all
non-$\mathrm{LRS}^-$ orbits resides in
the $\mathrm{II}_2\cup\mathrm{II}_3\cup\mathrm{K}^\ocircle$ boundary set. The same local analysis of this boundary set as in type $\mathrm{VI}_0$ yields the result for the non-$\mathrm{LRS}^-$ orbits in type $\mathrm{VII}_0$.

Next we prove the result for $\alpha$-limit sets.
Similar arguments as in the previous discussion about $\omega$-limit sets lead to the following: $\Sigma_+$ is monotonically decreasing when $\tau\rightarrow - \infty$, which shows that the $\alpha$-limit set for all non-$\mathrm{LRS}^+$ orbits resides in the set $A_2\cap A_3$.
\end{proof}
%



\section{Bianchi type $\mathrm{VIII}$ and $\mathrm{IX}$}\label{sec:typeVIIIandIX}

We prove the existence of several periodic orbits, some of which are far from the Mixmaster attractor consisting of Bianchi type $\mathrm{I}$ and $\mathrm{II}$, see \cite{Ringstrom, heiugg09b}.  
This yields a behaviour which is not described by the BKL picture.
We structure our main result in the following theorem.
%
%
\begin{thm}\label{thm:typeVIII_IX}
The differential equation \eqref{full:subsv=0} possesses nontrivial periodic orbits along which the constraint \eqref{constraint} is satisfied. In particular, there are periodic solutions along which the product $N_1 N_2 N_3$ is constant and equal to $10^k$, for each $k\in\{1,0,-1,\dots,-7\}$.
\end{thm}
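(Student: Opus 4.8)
The plan is to prove Theorem \ref{thm:typeVIII_IX} via a computer-assisted argument based on the radii-polynomial approach (a Newton--Kantorovich-type fixed point theorem in a Banach space of Fourier coefficients), applied to the differential equation \eqref{full:subsv=0}. The first step is to parametrize a candidate periodic orbit: I would write each component $(\Sigma_+,\Sigma_-,N_1,N_2,N_3)$ together with an unknown period $L>0$ as a Fourier series in a rescaled time variable $t\in[0,2\pi]$, so that the unknowns become the sequence of Fourier coefficients plus the scalar $L$. Because the conserved quantity \eqref{eq:conserved_quantity} forces $N_1N_2N_3$ to be constant along any orbit, I would treat $\Delta = 3\,|N_1N_2N_3|^{2/3}$ as a parameter; fixing $N_1N_2N_3=10^k$ pins down the level set on which we search. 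Substituting the Fourier ansatz into \eqref{full:subsv=0} and into the constraint \eqref{constraint} yields an infinite-dimensional zero-finding problem $F(x)=0$, where $F$ encodes the harmonic balance equations for the five vector-field components, the constraint (which, being a first integral on the constraint surface, need only be imposed at one time or as a phase-type condition), and a phase condition (e.g. a Poincaré section or an anchoring of one coefficient) to remove the time-translation symmetry and make isolated solutions possible.

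Next I would compute, numerically, an approximate periodic solution $\bar{x}$ (a finite Fourier truncation) for each target value $10^k$, $k\in\{1,0,-1,\dots,-7\}$ — e.g. by continuation in $\Delta$ starting from an easily found orbit — and set up the Newton-like operator $T(x) = x - A\,F(x)$, where $A$ is an injective approximate inverse of $DF(\bar{x})$ built from the inverse of the truncated Jacobian on the finite block and an explicit (diagonal-dominant) tail estimate on the complement. Then I would derive the radii polynomial bounds: a bound $Y_0 \ge \|A F(\bar{x})\|$ controlling the defect of the numerical solution, a bound $Z_0 \ge \|I - A A^\dagger\|$, and bounds $Z_1, Z_2$ controlling $\|A(DF(\bar x) - A^\dagger)\|$ and the Lipschitz constant of $A\,DF$ on a ball of radius $r$ (here the nonlinearities are only quadratic — products like $N_iN_j$ and $\Sigma^2\Sigma_\pm$ — so the Banach-algebra estimates on $\ell^1$-type weighted sequence spaces are polynomial in $r$ of degree at most two). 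If the radii polynomial $p(r) = Z_2 r^2 - (1 - Z_0 - Z_1) r + Y_0$ has a positive root $r_\ast$, the contraction mapping theorem gives a true solution $x_\ast$ of $F(x_\ast)=0$ within distance $r_\ast$ of $\bar{x}$, which — provided $r_\ast$ is smaller than the amplitude of the numerical orbit, so the solution is genuinely non-constant and $N_1N_2N_3$ is genuinely nonzero — is the desired nontrivial periodic orbit on the correct level set; all the finitely many inequalities are then verified in interval arithmetic.

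A few structural points need care. The constraint \eqref{constraint} is preserved by the flow, so I would verify that imposing it pointwise at $t=0$ (or as a single scalar equation) together with the ODEs is consistent and does not overdetermine the system — concretely, one shows $\frac{d}{d\tau}(\Sigma^2+\Omega_k)$ vanishes identically as a consequence of \eqref{full:subsv=0}, so the constraint propagates and the zero-finding problem has the expected index (the scalar constraint equation is balanced against the extra unknown $L$, or against the phase condition). Likewise, fixing $N_1N_2N_3=10^k$ exploits $\Delta'=0$ from \eqref{eq:conserved_quantity}: it suffices to impose $N_1N_2N_3=10^k$ at one instant and let the first integral carry it along, which keeps the functional-analytic setup clean. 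Finally, since the components $N_\alpha$ may approach zero but must stay positive (Bianchi type $\mathrm{IX}$ requires all $N_\alpha>0$, type $\mathrm{VIII}$ requires a definite sign pattern), I would additionally certify, from $\bar{x}$ and $r_\ast$, a uniform lower bound $\min_t N_\alpha(t) \ge \min_t \bar N_\alpha(t) - r_\ast > 0$ so that the proven orbit genuinely lies in the type $\mathrm{VIII}$ or $\mathrm{IX}$ region and is \emph{far} from the Bianchi $\mathrm{I}$--$\mathrm{II}$ Mixmaster boundary (where some $N_\alpha$ vanish), establishing the claimed departure from the BKL picture.

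The main obstacle I anticipate is twofold. First, the analytic one: constructing the tail operator and the $Z_1$ bound for a \emph{five-component} system with a first integral and a phase condition, while keeping the constant $(1-Z_0-Z_1)$ strictly positive, requires a sufficiently large truncation dimension and careful weighted-norm bookkeeping of the convolution (Banach-algebra) constants for the quadratic terms $\mathcal S_\pm$ and $\Sigma^2 N_\alpha$ — and the orbits with small $N_1N_2N_3=10^{-7}$ are likely to have sharper features (faster variation, larger Fourier tails), degrading all the bounds and possibly forcing a high-mode truncation and/or a multi-segment (domain-decomposition) formulation. Second, the practical one: obtaining good numerical initial data for each $10^k$ via continuation, and re-tuning the truncation size per case so that the radii polynomial has a root — this is routine in principle but is where most of the engineering effort lies. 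Everything else (the contraction estimate, the interval-arithmetic verification, the sign/positivity certification) is standard once the setup is in place.
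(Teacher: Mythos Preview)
Your overall framework (Fourier ansatz, radii-polynomial contraction, interval arithmetic, a posteriori sign checks) matches the paper's, but your equation/unknown count is off and would make the Newton--Kantorovich step fail. The system \eqref{full:subsv=0} has \emph{two} independent first integrals: the constraint quantity $\Sigma^2+\Omega_k$ and the product $N_1N_2N_3$. Together with time translation, periodic orbits therefore come in (at least) a two-parameter family, so the Jacobian of your $F$ has a kernel of dimension $\ge 2$ even after you impose a phase condition and one scalar constraint equation; equivalently, if you impose \emph{both} first integrals as scalar equations plus a phase condition against the single extra unknown $L$, the map is overdetermined by two and no bounded approximate inverse $A$ with $\|I-A\,DF\|<1$ exists. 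Your remark that ``the scalar constraint equation is balanced against the extra unknown $L$, or against the phase condition'' does not resolve this: there are three scalar conditions but only one scalar unknown.

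The paper's fix is to introduce two \emph{unfolding parameters} $\eta_1,\eta_2$ as additional unknowns, perturbing the vector field to \eqref{eq:auxiliary_ode} so that neither $\Sigma^2+\Omega_k$ nor $N_1N_2N_3$ is conserved when $(\eta_1,\eta_2)\neq(0,0)$. Now the zero-finding map has five Fourier sequences plus three scalars $(\gamma,\eta_1,\eta_2)$ as unknowns, against five harmonic-balance equations plus three scalar conditions ($\Sigma_+(0)=0$, $(\Sigma^2+\Omega_k)(0)=1$, $N_1N_2N_3(0)=10^k$), so $DF$ can be an isomorphism and the contraction goes through. The essential extra lemma (Lemma~\ref{lem:aux_po}) then shows \emph{a posteriori} that any periodic solution of the perturbed system with $\Sigma_+\not\equiv 0$ and $N_1N_3$ of constant nonzero sign must have $\eta_1=\eta_2=0$, by integrating $\frac{d}{dt}(N_1N_2N_3)$ and $\frac{d}{dt}(\Sigma^2+\Omega_k)$ over a period. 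This unfolding-plus-a-posteriori-vanishing mechanism is the missing idea in your plan; without it the radii-polynomial inequality $Z_1<1$ cannot hold. (Minor aside: for $v=0$ the terms $\Sigma^2\Sigma_\pm$ and $\Sigma^2 N_\alpha$ are absent, so the nonlinearity is purely bilinear, which actually simplifies your $Z_2$ bound.)
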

%

Theorem \ref{thm:typeVIII_IX} is proven by means of a computer-assisted approach. The implementation of the computer-assisted proofs is done in Julia (cf. \cite{Julia}) with the packages \textit{RadiiPolynomial.jl} (cf. \cite{RadiiPolynomial.jl}) and \textit{IntervalArithmetic.jl} (cf. \cite{IntervalArithmetic.jl}). The code for the computer-assisted proofs may be found at \cite{code_cite}. The details of the method are provided in Section \ref{sec:typeVIIandIX_Proofs}. 

Recall that \eqref{full:subsv=0} has two conserved quantities: the product $N_1N_2N_3$ in \eqref{eq:conserved_quantity}, and the right-hand side of the constraint $\Sigma^2 + \Omega_k$ in \eqref{constraint}. 
A consequence of the computer-assisted proof is that there exists a two-parameter family of periodic orbits defined in a neighbourhood of \textit{each} of the orbits in Theorem \ref{thm:typeVIII_IX}, parameterized by $(\Sigma^2 + \Omega_k,N_1N_2N_3)$ nearby $(1,10^k)$, for $k\in\{1,0,-1,\dots,-7\}$. 
We have not computed the size of these existence neighbourhoods.

Furthermore, we numerically computed the Floquet multipliers of each periodic orbit in Theorem \ref{thm:typeVIII_IX}; the numerical integrations of the monodromy matrices were done in Julia with the package \textit{DifferentialEquations.jl} (cf. \cite{DifferentialEquations.jl}).
We expect that there are three multipliers equal to unity, due to the two conserved quantities $(\Sigma^2 + \Omega_k,N_1N_2N_3)$ and the translation-invariance of the periodic orbit, which is in agreement with our numerical results.
For the two remaining multipliers, we consistently found one stable (absolute value less than one) and one unstable (absolute value greater than one) multiplier, which should influence the local dynamics. 
It is of interest to describe the forward dynamics of solutions within the unstable manifold of these periodic orbits, as we do not exclude the existence of homoclinic or heteroclinics between them. 
In particular, the global attractor for $v=0$ must be different than the conjectured Mixmaster attractor of type $\mathrm{I}$ and $\mathrm{II}$ solutions for $v>0$ in \cite{HellLappicyUggla}.

The projection of several orbits into the $\Sigma$-plane are plotted in Figure \ref{fig:TypeVIII_continuation} and Figure \ref{fig:TypeIX_continuation}. 
To further assist in visualizing the periodic orbits in the $\Sigma$-plane as the conserved quantity $N_1N_2N_3$ decreases from 10 to $10^{-7}$, we have computed a numerical continuation of the periodic orbits, together with their periods, through the entire range $N_1N_2N_3\in[10^{-7},10]$. 
We have used \textit{Makie.jl} \cite{Makie.jl} for visualization. While we believe that the continuum of periodic orbits visible in the figures could be proven using validated continuation methods (via the uniform contraction theorem, e.g. see \cite{val_cont1,val_cont2}), we have made no attempt to do this here. 
Time series plots for some orbits 
are available in Figure \ref{fig:TypeVIII_TimeSeries} and Figure \ref{fig:TypeIX_TimeSeries}.
\vspace{-0.05cm}
\begin{figure}[H]
\centering
\includegraphics[scale=0.15, trim = 1cm 0cm 1cm 0cm, clip]{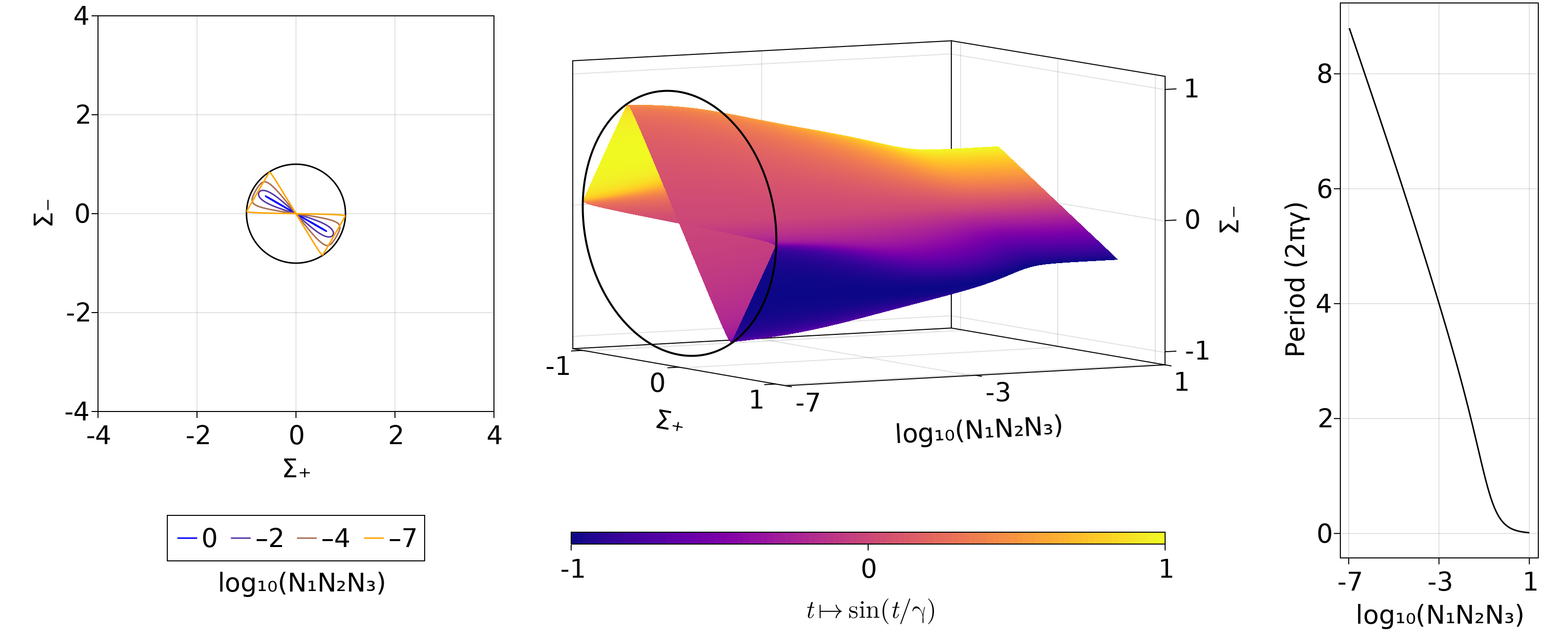}
\vspace{-0.6cm}
\caption{
Left: Projection into the $\Sigma$-plane of some Bianchi type $\mathrm{VIII}$ periodic orbits of Theorem \ref{thm:typeVIII_IX}. 
Middle: Numerical continuation of the Bianchi type $\mathrm{VIII}$ periodic orbits, 
for $N_1N_2N_3=10^k$, $k\in[-7,1]$. 
For each $k=\log_{10}(N_1N_2N_3)$, the projection of the periodic orbit in the $\Sigma$-plane is plotted, with parameterization of time $\tau \in [0,2\pi\gamma(k)]$, where $2\pi\gamma(k)$ is the period of the orbit (see the end of Section \ref{sec:typeVIIandIX_Proofs}), and where the colour is determined by the value of $\sin(\tau/\gamma)$; see the colour bar. Right: plot of the period as a function of $\log_{10}(N_1N_2N_3)$, i.e. the period decreases monotonically with respect to $N_1N_2N_3$.}\label{fig:TypeVIII_continuation}
\end{figure}
%
%
\vspace{-0.8cm}
\begin{figure}[H]
\centering
\centering\includegraphics[scale=0.14]{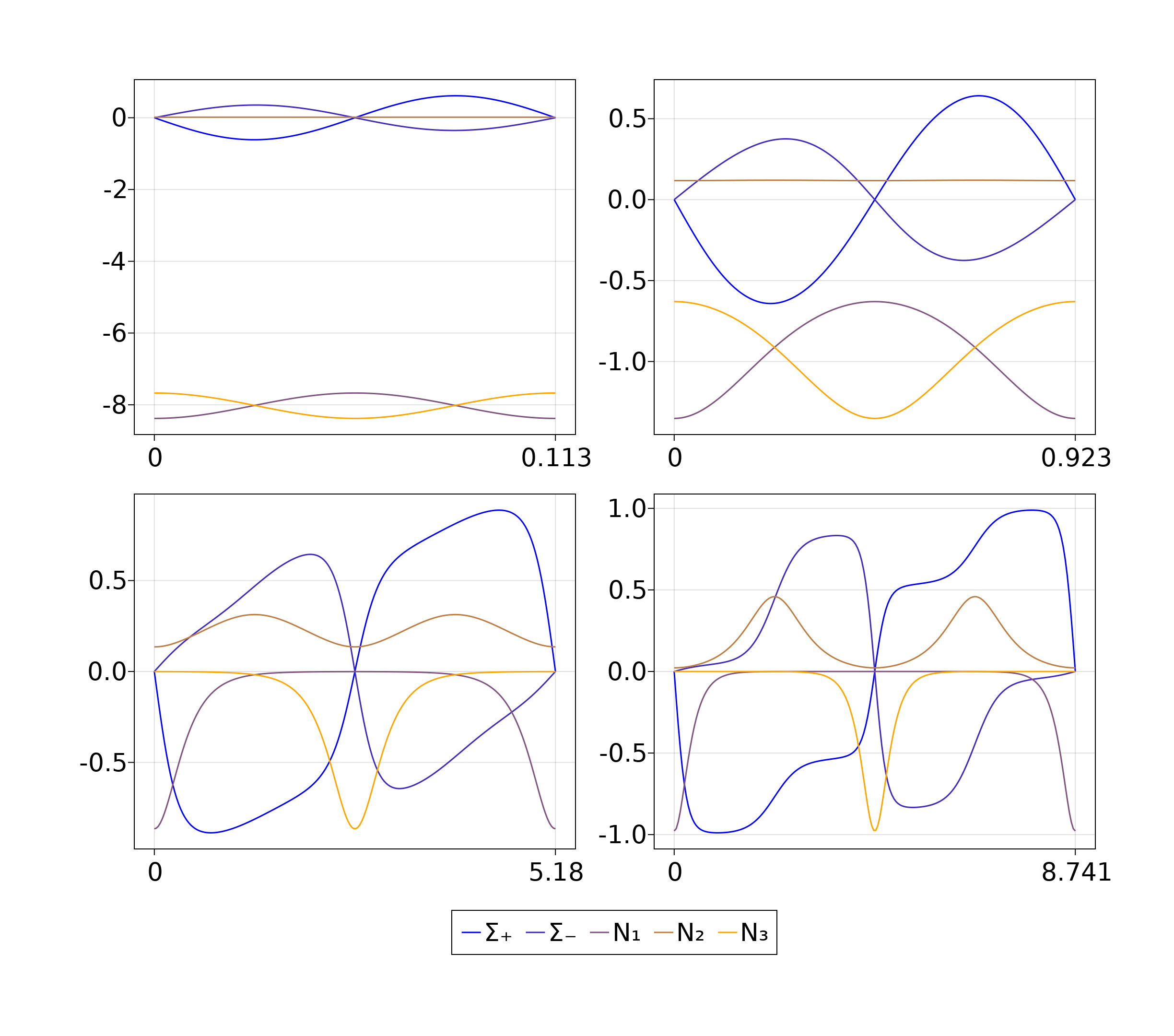}
\vspace{-0.9cm}
\caption{Time series plots of the Bianchi type $\mathrm{VIII}$ periodic orbits for some $N_1N_2N_3=\mbox{constant}$. Row 1: 
1 and $10^{-1}$ (left, right). Row 2: $10^{-4}$ and $10^{-7}$. 
One period plotted with time on the horizontal axis.}\label{fig:TypeVIII_TimeSeries}
\end{figure}

\begin{figure}[H]
\centering
\includegraphics[scale=0.15, trim = 1cm 0cm 1cm 0cm, clip]{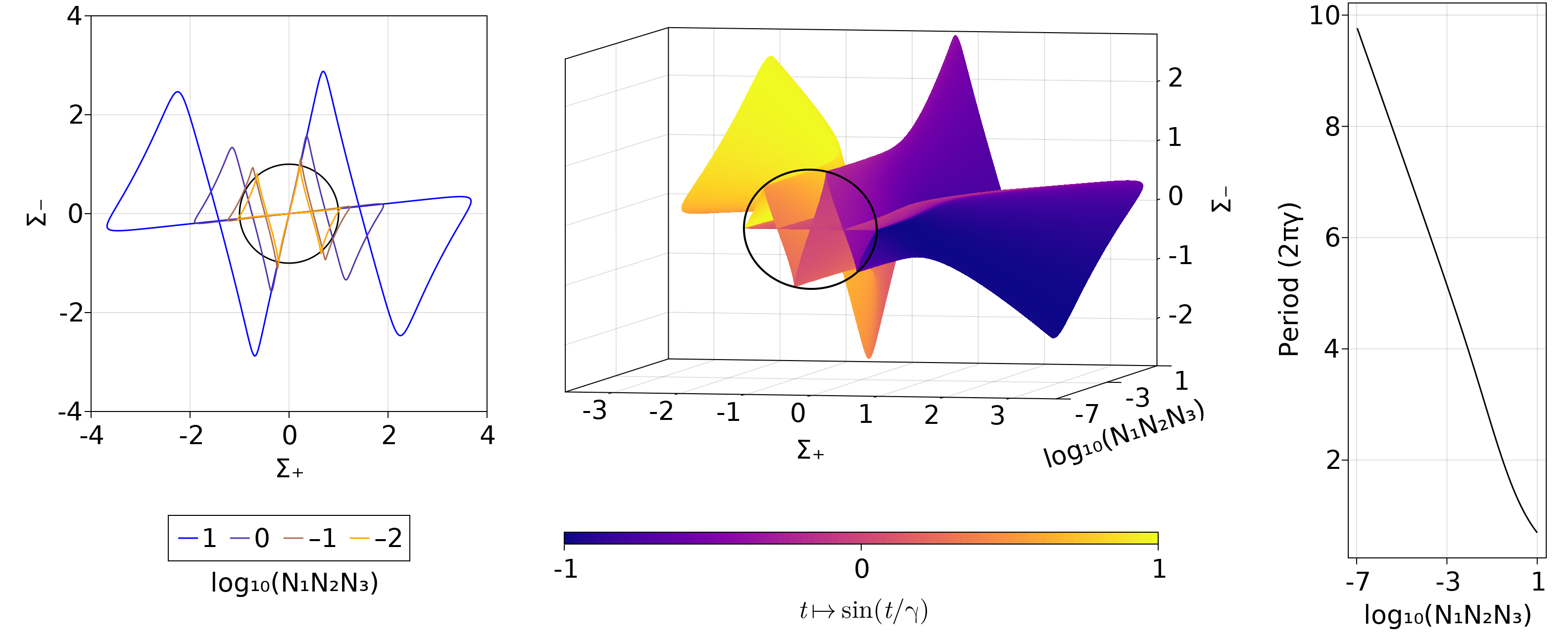}
\caption{
Left: Projection into the $\Sigma$-plane of some Bianchi type $\mathrm{IX}$ periodic orbits of Theorem \ref{thm:typeVIII_IX}. 
Even though these projections intersect the Kasner circle, note that the periodic orbits are far from the Mixmaster with a fixed `distance' $N_1N_2N_3$. 
Middle: Numerical continuation of the Bianchi type $\mathrm{IX}$ periodic orbits, 
for $N_1N_2N_3=10^k$, $k\in[-7,1]$. 
Right: plot of the period as a function of $\log_{10}(N_1N_2N_3)$.}\label{fig:TypeIX_continuation}
\end{figure}
\vspace{-0.75cm}
\begin{figure}[H]
\centering\includegraphics[scale=0.14]{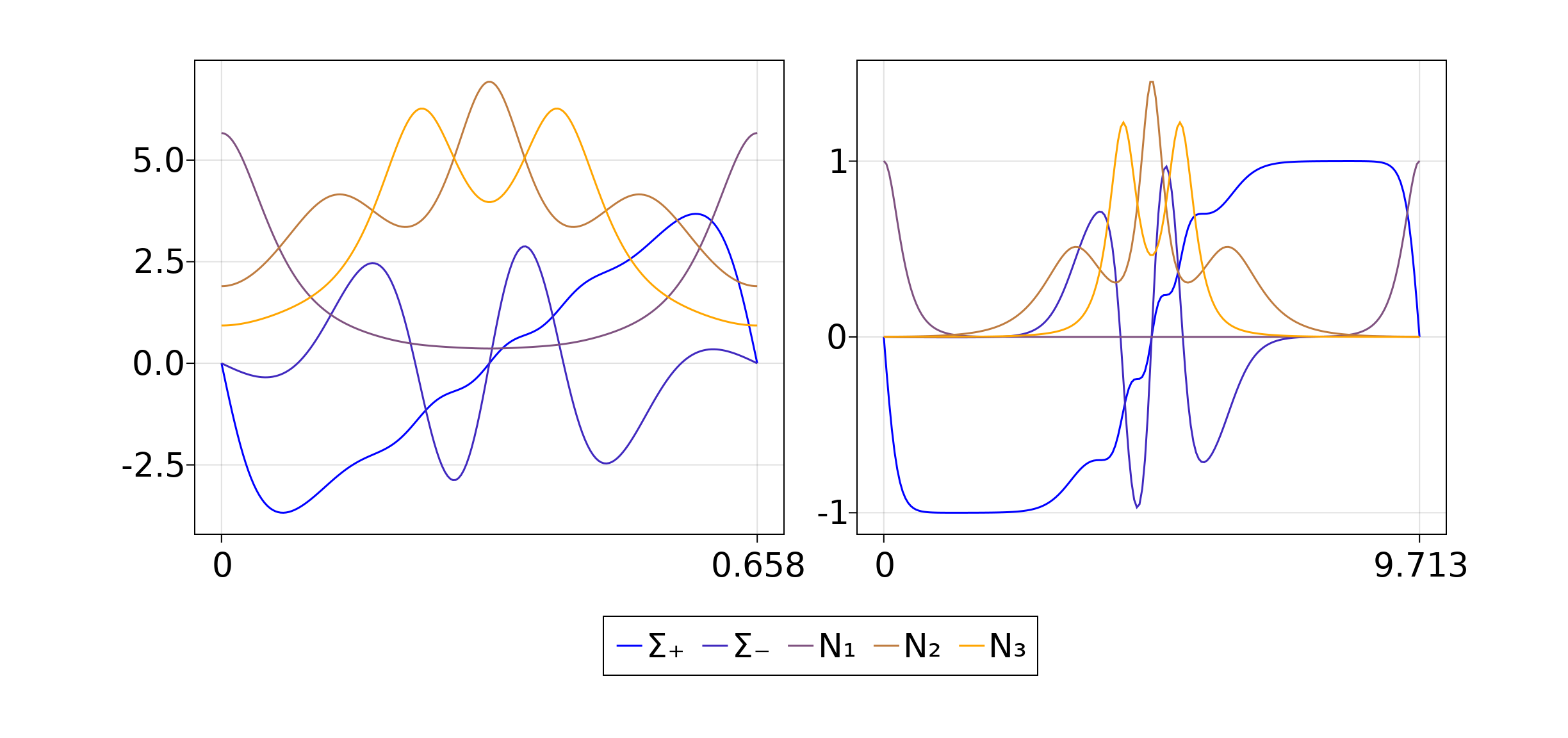}
\vspace{-0.4cm}
\caption{Time series plots of the Bianchi type $\mathrm{IX}$ periodic orbits at the two extremes: $N_1N_2N_3 = 10$ (left) and $N_1N_2N_3 = 10^{-7}$ (right). One period plotted with time on the horizontal axis.}\label{fig:TypeIX_TimeSeries}
\end{figure}
%
Numerically, the continua of periodic orbits of Theorem \ref{thm:typeVIII_IX} have boundaries ($N_1N_2N_3\rightarrow 0$) consisting of heteroclinic chains of lower Bianchi types. 
For type VIII, we conjecture that the boundary consists of the type $\mathrm{II}$ heteroclinic chain of period four that contains the Taub point $\mathrm{T}_1$, which is the limiting object of the (unique, up to $D_3$-symmetry) period four chain as $v\to 0$. 
This is supported by the case $N_1N_2N_3=10^{-7}$ in Figure~\ref{fig:TypeVIII_continuation}, its time series in Figure~\ref{fig:TypeVIII_TimeSeries} and Figure \ref{KC_0_map} (right).
For type $\mathrm{IX}$, a similar conjecture can be formulated. 
However, Figure~\ref{fig:TypeIX_TimeSeries} suggests that the boundary consists of one type $\mathrm{II}$ heteroclinic (when $N_1>0$) and one type $\mathrm{VII}_0$ heteroclinic (when $N_2>0,N_3>0$), see Figure \ref{KC_VI_VII}. 

\subsection{Computer-assisted proofs}\label{sec:typeVIIandIX_Proofs}

In this Section, we introduce the ideas behind the computer-assisted proofs (CAPs) of existence of periodic orbits. It is worth mentioning that the field of CAPs in dynamics is by now well-developed with some of the famous early pioneering works being the proof of the universality of the Feigenbaum constant \cite{feigenbaum} and the proof of existence of the strange attractor in the Lorenz system \cite{lorenz}. We refer the interested reader to the survey papers \cite{VANDENBERG_Dynamics,GOMEZ_PDESurvey,KOCH_ComputeAssisted, NAKAO_VerifiedPDE, RUMP_survey}, as well as the books \cite{MR3822720,MR3971222,TUCKER_ValidatedIntroduction}. 

In the present paper, we obtain CAPs via a Newton-Kantorovich like theorem (see \cite{MR0231218} for the original version and \cite{VANDENBERG_RigorousChaos} for more details). 
Since the vector field in \eqref{full:subsv=0} is analytic, then solutions are also analytic.
Hence, we expand a periodic solution in Fourier series and interpret its Fourier coefficients as isolated zeros of a mapping amenable for a Newton-like method. The constructed (Newton-like) fixed-point operator is contracting in the vicinity of a numerical approximation of the zero yielding a CAP via a Newton-Kantorovich argument. 
Our approach is inspired by the work of Yamamoto \cite{MR1639986}, the infinite-dimensional Krawczyk operator \cite{MR2394226} and more closely by the approach proposed in \cite{DAY_ValidatedContinuation}.
Note that functional analytic methods of CAPs for studying periodic orbits of differential equations go back to the work of Cesari on Galerkin projections for periodic solutions \cite{MR0151678, MR0173839}.

%
Recall that there are two conserved quantities for solutions of \eqref{full:subsv=0}, given by $\Sigma^2 + \Omega_k$ which is the right-hand side in \eqref{constraint} and $\Delta$ in \eqref{eq:conserved_quantity}. Note that for the Bianchi types $\mathrm{VIII}$ and $\mathrm{IX}$ with $v=0$, the conserved quantity $\Delta$ is equivalent to the product $N_1 N_2 N_3$ being constant.
This yields a two-parameter family of periodic solutions; whence, to isolate the periodic orbits, we search for $2\pi$-periodic orbits of the following auxiliary ODE
\begin{subequations}\label{eq:auxiliary_ode}
	\begin{align}
	\frac{d}{dt} \Sigma_+ &= 2\gamma \left[ \left(N_3-N_2\right)^2 - N_1\left(2N_1-N_2-N_3\right) \right]  + \eta_1 \Sigma_+, \label{eq:auxiliary_odeSigma+} \\
	\frac{d}{dt} \Sigma_- &= 2\gamma \sqrt{3} \left(N_3-N_2\right)\left(N_1-N_2-N_3\right), \label{eq:auxiliary_odeSigma-} \\
	\frac{d}{dt} N_1 	   &= 4\gamma \Sigma_+ N_1, \label{eq:auxiliary_odeN1} \\
	\frac{d}{dt} N_2 	   &= -2\gamma ( \Sigma_+ + \sqrt{3}\Sigma_-)N_2 + \eta_2, \label{eq:auxiliary_odeN2} \\
	\frac{d}{dt} N_3 	   &= -2\gamma ( \Sigma_+ - \sqrt{3}\Sigma_- )N_3, \label{eq:auxiliary_odeN3}
	\end{align}
\end{subequations}
where $\gamma, \eta_1, \eta_2 \in \mathbb{C}$ are indeterminate constants.

Whenever $\eta_1 = \eta_2 = 0$, the equations \eqref{eq:auxiliary_ode} reduce to the original system \eqref{full:subsv=0} scaled by a factor $\gamma$.
The constant $\gamma$ arises from a time scaling, $\tau \mapsto t(\tau) := \gamma^{-1}\tau$, which turns a $2\pi\gamma$-periodic solution into a $2\pi$-periodic solution. The constants $\eta_1,\eta_2$ play the role of \textit{unfolding parameters} such that $\eta_1 = \eta_2 = 0$ under suitable conditions. 
The 
next lemma, reminiscent of the approach presented in \cite{Brugos_L_MJ},  gives sufficient conditions for the periodic solutions of \eqref{eq:auxiliary_ode} to yield periodic solutions of \eqref{full:subsv=0}.
\begin{lem}\label{lem:aux_po}
Let $t \in \mathbb{R} \mapsto (\Sigma_+(t), \Sigma_-(t), N_1(t), N_2(t), N_3(t))$ be a $2\pi$-periodic solution of \eqref{eq:auxiliary_ode}.
If $\Sigma_+, N_1 N_3$ are not identically zero and $N_1 N_3$ has constant sign, then $\eta_1 = \eta_2 = 0$. \newline
Additionally, if $\gamma \in (0, +\infty)$, then $\tau \in \mathbb{R} \mapsto (\Sigma_+(\gamma^{-1}\tau), \Sigma_-(\gamma^{-1}\tau), N_1(\gamma^{-1}\tau), N_2(\gamma^{-1}\tau), N_3(\gamma^{-1}\tau))$ is a $2\pi \gamma$-periodic solution of \eqref{full:subsv=0}.
\end{lem}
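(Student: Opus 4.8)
The plan is to exploit the two conserved quantities of the original system, which persist in a modified form for the auxiliary system, and to extract the vanishing of $\eta_1,\eta_2$ from the periodicity constraint. First I would look at equation \eqref{eq:auxiliary_odeN1}: since $\frac{d}{dt}N_1 = 4\gamma\Sigma_+N_1$ is unchanged from the original, we have $\frac{d}{dt}\log|N_1| = 4\gamma\Sigma_+$ wherever $N_1\neq 0$; but $N_1$ cannot vanish at isolated times (it solves a linear scalar ODE in $N_1$, so $N_1(t_0)=0$ would force $N_1\equiv 0$), hence $N_1$ has constant sign and $\log|N_1|$ is a genuine $2\pi$-periodic function. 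Integrating over one period gives $\int_0^{2\pi}\Sigma_+(t)\,dt = 0$. Similarly, combining \eqref{eq:auxiliary_odeN2} and \eqref{eq:auxiliary_odeN3}: the product $N_1N_2N_3$ satisfies $\frac{d}{dt}(N_1N_2N_3) = \big(4\gamma\Sigma_+ - 2\gamma(\Sigma_++\sqrt3\Sigma_-) - 2\gamma(\Sigma_+-\sqrt3\Sigma_-)\big)N_1N_2N_3 + \eta_2 N_1N_3 = \eta_2 N_1N_3$, because the bracket telescopes to zero exactly as in \eqref{eq:conserved_quantity}. Since $N_1N_3$ has constant sign by hypothesis and is not identically zero, $\int_0^{2\pi}N_1N_3\,dt\neq 0$; but the left side integrates to zero over a period by periodicity of $N_1N_2N_3$, forcing $\eta_2 = 0$.

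Next I would recover $\eta_1 = 0$. With $\eta_2=0$ already established, equation \eqref{eq:auxiliary_odeN1} is untouched, so the argument above gives $\int_0^{2\pi}\Sigma_+\,dt=0$. Now consider the evolution of $\Sigma_+$ directly from \eqref{eq:auxiliary_odeSigma+}, or more cleanly the quantity $\Sigma_+^2$ together with the $N$-variables: the point is to isolate $\eta_1$. One clean route: multiply \eqref{eq:auxiliary_odeSigma+} by $\Sigma_+$ and compare with the time-derivative of the constraint combination. With $\eta_1=\eta_2=0$ the full right-hand side would conserve $\Sigma^2+\Omega_k$; with $\eta_1\neq 0$ the defect is $2\eta_1\Sigma_+^2$ (the contribution of $\eta_1\Sigma_+$ to $\frac{d}{dt}\Sigma_+^2$). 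Hence $\frac{d}{dt}(\Sigma^2+\Omega_k) = 2\eta_1\Sigma_+^2$ along solutions of \eqref{eq:auxiliary_ode}; integrating over one period, the left side vanishes by periodicity, so $\eta_1\int_0^{2\pi}\Sigma_+(t)^2\,dt = 0$. Since $\Sigma_+$ is not identically zero, $\int_0^{2\pi}\Sigma_+^2\,dt>0$, whence $\eta_1=0$.

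With $\eta_1=\eta_2=0$ the auxiliary system \eqref{eq:auxiliary_ode} is precisely \eqref{full:subsv=0} with the right-hand side multiplied by $\gamma$. The final assertion is then the elementary time rescaling: if $t\mapsto x(t)$ solves $\dot x = \gamma F(x)$ and is $2\pi$-periodic, then for $\gamma\in(0,+\infty)$ the map $\tau\mapsto x(\gamma^{-1}\tau)$ satisfies $\frac{d}{d\tau}x(\gamma^{-1}\tau) = \gamma^{-1}\dot x(\gamma^{-1}\tau) = F(x(\gamma^{-1}\tau))$, so it solves \eqref{full:subsv=0}, and it is $2\pi\gamma$-periodic since $x$ is $2\pi$-periodic. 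I expect the main obstacle to be purely bookkeeping rather than conceptual: one must verify carefully that the telescoping cancellations in $\frac{d}{dt}(N_1N_2N_3)$ and in $\frac{d}{dt}(\Sigma^2+\Omega_k)$ leave exactly the $\eta$-terms $\eta_2 N_1N_3$ and $2\eta_1\Sigma_+^2$ and nothing else, and that the sign/non-degeneracy hypotheses ($\Sigma_+\not\equiv 0$, $N_1N_3\not\equiv 0$ with constant sign) are used precisely where the positivity of the relevant period integrals is invoked; note also that the constant-sign property of $N_1$ itself is automatic, so only the constant sign of the product $N_1N_3$ needs to be assumed.
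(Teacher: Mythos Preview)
Your proposal is correct and follows essentially the same route as the paper: integrate $\tfrac{d}{dt}(N_1N_2N_3)=\eta_2 N_1N_3$ over a period to kill $\eta_2$, then integrate $\tfrac{d}{dt}(\Sigma^2+\Omega_k)=2\eta_1\Sigma_+^2$ (using $\eta_2=0$) to kill $\eta_1$, and finish with the time rescaling. Your side observation that $\int_0^{2\pi}\Sigma_+\,dt=0$ via $\log|N_1|$ is valid (when $\gamma\neq 0$) but is never used and can be dropped.
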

\begin{proof}
Given a $2\pi$-periodic solution of \eqref{eq:auxiliary_ode}, then $N_1 N_2 N_3$ is also $2\pi$-periodic such that
\begin{equation}
0 = \int_0^{2\pi} \frac{d}{dt}\left(N_1(t) N_2(t) N_3(t)\right) \, dt = \eta_2 \int_0^{2\pi} N_1 (t) N_3 (t) \, dt.
\end{equation}
Since $N_1 N_3$ has constant sign and is not identically zero, the previous equality implies $\eta_2 = 0$.
Similarly, $\Sigma^2 + \Omega_k$ is $2\pi$-periodic satisfying
\begin{equation}\label{2ndequality}
0 = \int_0^{2\pi} \frac{d}{dt}\left(\Sigma^2 + \Omega_k\right) \, dt = 2 \eta_1\int_0^{2\pi} \Sigma_+(t)^2 \, dt,
\end{equation}
where we used $\eta_2 = 0$. Since $\Sigma_+$ is not identically zero, the equality \eqref{2ndequality} implies $\eta_1 = 0$.

Lastly, since $\gamma \in (0, +\infty)$, it is straightforward to check that the derivative of the solution $\tau \in \mathbb{R} \mapsto (\Sigma_+(\gamma^{-1}\tau), \Sigma_-(\gamma^{-1}\tau), N_1(\gamma^{-1}\tau), N_2(\gamma^{-1}\tau), N_3(\gamma^{-1}\tau))$ satisfies \eqref{full:subsv=0}.
\end{proof}

The periodic solutions of \eqref{eq:auxiliary_ode} are analytic, since they are solutions of an  ODE defined by an analytic vector field. Therefore, they admit a Fourier expansion whose Fourier coefficients decay exponentially. The Banach space of bi-infinite sequences with geometric decay\footnote{Note that $X_\textnormal{Fourier}$ is $\ell^1$ when $\nu=1$ and the sequences do not enjoy decay. In case $\nu>1$, a sequence $a \in X_\textnormal{Fourier}$ decays exponentially fast to zero with decay rate at least $\nu$.} rate $\nu \geq 1$ is defined as
\begin{equation*}
    X_\textnormal{Fourier} := \left\{ a \in \mathbb{C}^\mathbb{Z} \, : \, | a |_{X_\textnormal{Fourier}} := \sum_{k \in \mathbb{Z}} |a_k| \nu^{|k|} < +\infty \right\}.
\end{equation*}
The Banach space $X_\textnormal{Fourier}$ becomes a Banach algebra with the discrete convolution denoted by $* : X_\textnormal{Fourier} \times X_\textnormal{Fourier} \to X_\textnormal{Fourier}$ and defined by $a * b := \left\{ \sum_{l \in \mathbb{Z}} a_{k - l} b_l \right\}_{k \in \mathbb{Z}}$.
%
Let $X := X_\textnormal{Fourier}^5 \times \mathbb{C}^3$ be the Banach space equipped with the norm
\begin{equation*}
    |x|_X := \max\left( \max_{j =1, \dots, 5} |a_j|_{X_\textnormal{Fourier}}, |\gamma|, |\eta_1|, |\eta_2|\right), \quad \text{for all } x := (a_1, a_2, a_3, a_4, a_5, \gamma, \eta_1, \eta_2) \in X.
\end{equation*}
Fix $c \in \mathbb{R}$. Note that, in Theorem~\eqref{thm:typeVIII_IX}, we fix $c = 10^k$ for each $k \in \{1, 0, \dots, -7\}$. Requiring $\left(\Sigma^2 + \Omega_k\right)|_{t=0} = 1, N_1(0)N_2(0)N_3(0) = c, \Sigma_+(0)=0$ and plugging
\begin{subequations}
\begin{align*}
\Sigma_+(t) &= \sum_{k \in \mathbb{Z}} (a_1)_k e^{i k t}, \quad \Sigma_-(t)  = \sum_{k \in \mathbb{Z}} (a_2)_k e^{i k t}, \\
N_1(t)          &= \sum_{k \in \mathbb{Z}} (a_3)_k e^{i k t}, \quad N_2(t) = \sum_{k \in \mathbb{Z}} (a_4)_k e^{i k t}, \quad N_3(t) = \sum_{k \in \mathbb{Z}} (a_5)_k e^{i k t},
\end{align*}
\end{subequations}
into \eqref{eq:auxiliary_ode}, yields the unbounded operator $F : X \to X$ given by
\begin{equation*}
F(a_1, a_2, a_3, a_4, a_5, \gamma, \eta_1, \eta_2) :=
\begin{pmatrix}
2\gamma \left[ \left(a_5-a_4\right)^{*2} - a_3 * \left(2a_3-a_4-a_5\right) \right] + \eta_1 a_1 - \mathcal{D}(a_1)\\
2\gamma \sqrt{3} \left(a_5-a_4\right) * \left(a_3-a_4-a_5\right) - \mathcal{D}(a_2) \\
4\gamma a_1 * a_3 - \mathcal{D}(a_3) \\
-2\gamma ( a_1 + \sqrt{3}a_2 ) * a_4 + \eta_2 - \mathcal{D}(a_4) \\
-2\gamma ( a_1 - \sqrt{3}a_2 ) * a_5 - \mathcal{D}(a_5) \\
\mathcal{E}(a_1) \\
(\Sigma^2 + \Omega_k)(\mathcal{E}(a_1), \mathcal{E}(a_2), \mathcal{E}(a_3), \mathcal{E}(a_4), \mathcal{E}(a_5)) - 1 \\
\mathcal{E}(a_3) \mathcal{E}(a_4) \mathcal{E}(a_5) - c
\end{pmatrix},
\end{equation*}
where $a^{*2} := a*a$, $\mathcal{D}(a) = \{i k a_k\}_{k \in \mathbb{Z}}$ for all $a \in X_\textnormal{Fourier}$ such that $\mathcal{D}(a) \in X_\textnormal{Fourier}$ and $\mathcal{E}(a) = \sum_{k \in \mathbb{Z}} a_k$ for all $a \in X_\textnormal{Fourier}$. Formally, $\mathcal{D}:X_\textnormal{Fourier}\rightarrow X_\textnormal{Fourier}$ is the representation of the (unbounded) differentiation operator, and $\mathcal{E}:X_\textnormal{Fourier}\rightarrow\mathbb{C}$ is the evaluation at zero functional. Note that we introduced the requirement that $\Sigma_+(0)=0$ to quotient out the temporal translation invariance of the periodic orbit; this specific choice was motivated by numerical observations and symmetry of the system \eqref{full:subsv=0}.

The CAP consists in showing the existence of $\tilde{x} := (\tilde{a}_1, \tilde{a}_2, \tilde{a}_3, \tilde{a}_4, \tilde{a}_5, \tilde{\gamma}, \tilde{\eta}_1, \tilde{\eta}_2) \in X$ such that

\begin{enumerate}
\item[(i)] $F(\tilde x) = 0$ and $\sum_{k \in \mathbb{Z}} (\tilde{a}_j)_k e^{i k t} \in \mathbb{R}$ for $j=1,\dots, 5$ and all $t \in [0, 2\pi)$. This implies that $\tilde{a}_1, \tilde{a}_2, \tilde{a}_3, \tilde{a}_4, \tilde{a}_5$ are the Fourier coefficients of a real 
$2\pi$-periodic orbit of \eqref{eq:auxiliary_ode} given by $\left(\Sigma_+ , \Sigma_- , N_1 , N_2 , N_3 \right) :=( \sum_{k \in \mathbb{Z}} \tilde{a}_1 e^{i k t}, \sum_{k \in \mathbb{Z}} \tilde{a}_2 e^{i k t}, \sum_{k \in \mathbb{Z}} \tilde{a}_3 e^{i k t}, \sum_{k \in \mathbb{Z}} \tilde{a}_4 e^{i k t},\sum_{k \in \mathbb{Z}} \tilde{a}_5 e^{i k t} )$.
\item[(ii)] $\Sigma_+, N_1 N_3$ are not identically zero and $N_1 N_3$ has constant sign for some $\gamma \in (0, +\infty)$. Thus $\tau \in \mathbb{R} \mapsto (\Sigma_+(\gamma^{-1}\tau), \Sigma_-(\gamma^{-1}\tau), N_1(\gamma^{-1}\tau), N_2(\gamma^{-1}\tau), N_3(\gamma^{-1}\tau))$ is a real $2\pi \gamma$-periodic solution of \eqref{full:subsv=0} according to Lemma~\ref{lem:aux_po}.
\end{enumerate}

To start with, one finds a numerical approximation of a periodic orbit of \eqref{full:subsv=0} with approximate frequency $\gamma_0^{-1} > 0$; this may be achieved via a combination of an iterative procedure (e.g.\ Newton's method or gradient descent) and numerical integration of the vector fields. 
From the scaling $\tau \mapsto \gamma_0^{-1} \tau$, one obtains $2\pi$-periodic functions whose Fourier coefficients are denoted by $a_{0,1}, a_{0,2}, a_{0,3}, a_{0,4}, a_{0,5}$. Consequently, one has successfully produced a numerical periodic solution of \eqref{eq:auxiliary_ode} with parameters $\gamma = \gamma_0$ and $\eta_1 = \eta_2 = 0$. In other words, this process yields an approximate zero $x_0 = (a_{0,1}, a_{0,2}, a_{0,3}, a_{0,4}, a_{0,5}, \gamma_0, \eta_1, \eta_2)$ of $F$.

The proof of (i) relies on the contraction of a fixed-point operator in a ball centred at $x_0$. Notably, the radius $r_0$ of this ball is an a posteriori error bound for the numerical approximation $x_0$. Whence, (ii) amounts to a simple rigorous evaluation on $[0, 2\pi]$ of Fourier series with a known error bound $r_0$. The actual procedure for both (i) and (ii) are purely technical and are given in Appendix \ref{appendix:step_2}.

\section{Conclusion}\label{sec:disc}


We have described the full stratification of invariant sets within the dynamical phase-space of Bianchi models in HL gravity in case of a cubic dominant potential in \eqref{calV}. Similar to GR, the Bianchi type $\mathrm{I}$ consists of a circle of equilibria and type $\mathrm{II}$ consists of heteroclinic orbits. However, the type $\mathrm{II}$ dynamics induce the Kasner map which is not chaotic, in contrast to GR.
This is in agreement with previous results in the literature which concludes that the dynamics towards the singularity is oscillating, but not chaotic, see \cite{Kamenshchik,Bakas10}. 

Moreover, we have proved the existence of several periodic orbits of Bianchi type $\mathrm{VIII}$ and $\mathrm{IX}$ in Theorem \ref{thm:typeVIII_IX}, some of which are far from the Mixmaster attractor consisting of Bianchi type $\mathrm{I}$ and $\mathrm{II}$.  
This yields a behaviour which is not described by the BKL picture of bouncing Kasner-like states. Hence, our present results indicate that the asymptotic dynamics for $v=0$ is different from GR, and in particular, the global attractor of the ODE \eqref{full:subsv=0} is bigger than the usual Mixmaster. This shows that the Kasner map induced by the type $\mathrm{II}$ solutions does not accurately approximate the asymptotic dynamics of \eqref{full:subsv=0}, in contrast to \cite{Kamenshchik,Bakas10}.
Note that this is expected, since the function $\Delta$ in \eqref{eq:conserved_quantity}, which is a Lyapunov function that accounts for the convergence towards the Mixmaster for $v>0$, becomes a conserved quantity satisfying $\Delta'=0$ for $v=0$.

Several questions posed in Section \ref{sec:typeVIIIandIX} for $v=0$ remain to be answered.
For example, do the periodic orbits always occur in one-parameter families? 
What is the boundary of such one-parameter families? 
Are there heteroclinic connections between periodic orbits for each constant value $\Delta\in \mathbb{R}_+$? 
More generally, what is the overall asymptotic dynamics for each fixed $\Delta\in \mathbb{R}_+$, e.g. $\Delta\equiv 1$? Is the qualitative dynamics for small $\Delta$ (i.e. close to the Mixmaster) similar to the one of large $\Delta$ (i.e. far from the Mixmaster)? What is the global attractor of the ODE \eqref{full:subsv=0} bound to the constraint \eqref{constraint}?

Beyond $v=0$, the relationship of the periodic orbits in Theorem \ref{thm:typeVIII_IX} with the subcritical case, $v\in (0,1/2)$, the critical GR case, $v=1/2$, and similar models remains a mystery. Are the center-stable manifolds (of a given one-parameter family of periodic orbits for $v=0$) the limiting object of an appropriate stable manifold (of a type $\mathrm{II}$ heteroclinic chain for $v>0$), as $v\to 0$? See \cite[Conjectures 7.1-7.3]{HellLappicyUggla}.
%
Aside from that, note that the (three sets of) type $\mathrm{II}$ parallel heteroclinics orbits in HL described in Section \ref{sec:typeIandII} bear some similarity to the (two sets of) Bianchi type $\mathrm{VI}_{-1/9}$ frame transitions in Iwasawa frame. Can the ODE \eqref{full:subsv=0} shine a light on the conjecture of an attractor for the exceptional Bianchi models of class B? See \cite{Exceptional,ugg13a,ugg13b}, and in particular, the existence of periodic orbits for non-vacuum Bianchi type $\mathrm{VI}_{-1/9}$ models in \cite[Theorem 4.1]{ColeyEtal}.
%
Lastly, the model \eqref{full:subsv=0} also share some resemblance to a generalized Toda problem in two dimensions and to the H\'enon-Heiles system, see \cite[Appendix A]{HellLappicyUggla}.
Thus the ODE \eqref{full:subsv=0} provides a simple toy model that poses new directions which may guide the search for new conclusions as regards GR and other problems. 
%

\appendix

\section{Derivation of the ODE model}\label{app}
We deduce the evolution equations \eqref{full:subs} from the action \eqref{action} following \cite[Appendix A]{HellLappicyUggla}.
For the vacuum HL class~A Bianchi models, the action~\eqref{action}
expressed in terms of a symmetry adapted spatial (left-invariant) co-frame $\{{\omega}^1,{\omega}^2,{\omega}^3\}$  yields the field equations for the associated metric~\eqref{genmetric}.
Expressing the components of the spatial metric in such a symmetry adapted
spatial co-frame leads to that they become purely time-dependent in diagonal form, see~\cite{waiell97} and references therein. 
Setting the shift vector $N_i$ in~\eqref{genmetric} to zero, 
the diagonalized vacuum spatially homogeneous class~A metrics are given by
\begin{equation}\label{threemetric}
\mathbf{g} = -N^2(t)dt\otimes
dt + g_{11}(t)\:{\omega}^1\otimes {\omega}^1 +
g_{22}(t)\:{\omega}^2\otimes {\omega}^2 +
g_{33}(t)\:{\omega}^3\otimes {\omega}^3,
\end{equation}
where the lapse $N(t)$ is a non-zero function determining
the particular choice of time variable.

In order to obtain simple Hamiltonian equations, we first focus on the kinetic
part ${\cal T}$ in equation~\eqref{kin}, which can be written as
\begin{equation}\label{kin2}
{\cal T} = (K^1\!_1)^2 + (K^2\!_2)^2 + (K^3\!_3)^2 - \lambda (K^1\!_1 + K^2\!_2 + K^3\!_3)^2,
\end{equation}
where the extrinsic curvature is given by
$(K_{11}, K_{22}, K_{33}) = (\dot{g}_{11}, \dot{g}_{22}, \dot{g}_{33})/(2N)$ such that $\dot{}$ denotes a derivative with respect to $t$, and thus raising one of the indices, we obtain that $(K^1\!_1, K^2\!_2, K^3\!_3) 
= \left(\dot{g}_{11}/g_{11}, \dot{g}_{22}/g_{22}, \dot{g}_{33}/g_{33}\right)/(2N)$.
%

To simplify ${\cal T}$, we make a variable transformation
from the metric components to the variables $\beta^0,\beta^+,\beta^-$,
first introduced by Misner~\cite{mis69a,mis69b,grav73},
\begin{equation}\label{Misnerbeta}
g_{11} = e^{2( \beta^0 - 2\beta^+)}, \qquad
g_{22} = e^{2(\beta^0 + \beta^+ + \sqrt{3}\beta^-)},\qquad
g_{33} = e^{2(\beta^0 + \beta^+ - \sqrt{3}\beta^-)}.
\end{equation}
This results in that ${\cal T}$ in equation~\eqref{kin2} takes the form
\begin{equation}\label{calT}
{\cal T} = \dfrac{6}{N^2}
\left[-\left(\frac{3\lambda-1}{2}\right)(\dot{\beta}^0)^2 + (\dot{\beta}^+)^2 + (\dot{\beta}^-)^2\right].
\end{equation}
Note that the character of the quadratic form~\eqref{calT} changes
when $\lambda = 1/3$. Since we are interested
in continuously deforming the GR case $\lambda = 1$, we restrict considerations to
$\lambda > 1/3$. To simplify the kinetic part further, we introduce a new variable
$\beta^\lambda$ and a density-normalized lapse function $\mathcal{N}$, defined by
\begin{equation}\label{betaLAMBDAandNcal}
\beta^\lambda := \sqrt{\frac{3\lambda-1}{2}}\beta^0,\qquad
{\cal N} := \frac{N}{12\sqrt{g}},
\end{equation}
where $g = g_{11}g_{22}g_{33} = \exp(6\beta^0)$ is the determinant of the spatial metric
in the symmetry adapted co-frame, which leads to,
\begin{equation}
\sqrt{g}N{\cal T} = \dfrac{1}{2 {\cal N}}\left[-(\dot{\beta}^\lambda)^2 + (\dot{\beta}^+)^2 + (\dot{\beta}^-)^2\right].
\end{equation}

It is convenient to define $T := \frac{\sqrt{g} N}{\mathcal{N}}\mathcal{T} = 12g{\cal T}$, so that ${\cal N}T$ is the kinetic part of the Lagrangian for
the present spatially homogeneous models, in
analogy with the GR case, see e.g., ch. 10 in~\cite{waiell97}.
The density-normalized lapse ${\cal N}$ is kept in the kinetic term
${\cal N}T$, since it is needed in order to obtain the Hamiltonian
constraint, which is accomplished by varying ${\cal N}$ in the Hamiltonian.

To proceed to a Hamiltonian description, we introduce the canonical momenta
\begin{equation}
p_\lambda := -\frac{\dot{\beta}^\lambda}{{\cal N}}, \qquad p_\pm :=\frac{\dot{{\beta}}^\pm}{{\cal N}}.
\end{equation}
This leads to that $T$ takes the form
\begin{equation}\label{Tkinetics}
T = \frac12\left(- p_\lambda^2 + p_+^2 + p_-^2\right).
\end{equation}

Similarly to the treatment of the kinetic part, we define
\begin{equation}
V := \sqrt{g}N {\cal V}/{\cal N}=12 g {\cal V}.
\end{equation}
Due to~\eqref{calV},
\begin{equation}\label{VHL}
V = {}^1V + {}^2V + {}^3V + {}^4V + {}^5V + {}^6V + \dots ,
\end{equation}
where
\begin{subequations}\label{pots}
\begin{alignat}{3}
{}^1V &:= 12k_1 gR, &\qquad {}^2V &:= 12k_2 gR^2,  &\qquad
{}^3V &:= 12k_3 g R^i\!_jR^j\!_i,\\
{}^4V &:= 12k_4 g R^i\!_jC^j\!_i,
&\qquad {}^5V &:= 12k_5 g C^i\!_jC^j\!_i, &\qquad {}^6V &:= 12k_6 gR^3.
\end{alignat}
\end{subequations}
The superscripts on ${}^AV$ (where $A = 1,\dots,6$) thereby coincide with the
subscripts of the constants $k_A$ in~\eqref{calV}.

Based on~\eqref{action}, this leads to a Hamiltonian $H$ given by
\begin{equation}\label{LambdaRham}
H := \sqrt{g}N({\cal T} + {\cal V}) = {\cal N}(T + V) = 0,
\end{equation}
where $T$ only depends on the canonical momenta $p_\lambda$, $p_\pm$,
given by~\eqref{Tkinetics}, and $V$ only depends on
$\beta^\lambda$, $\beta^\pm$, given by~\eqref{VHL} and~\eqref{pots}.

In order to derive the ordinary differential equations for these models
via the Hamiltonian equations in terms of the variables
$\beta^\lambda$, $\beta^\pm$ and the canonical momenta $p_\lambda$, $p_\pm$,
we need to compute each ${}^AV(\beta^\lambda,\beta^\pm)$.
We proceed with the simplest case that minimally modifies vacuum GR in the present context, the vacuum $\lambda$-$R$ models~\cite{giukie94,belres12,lolpir14}.
They are obtained from an action that consists of the generalized
kinetic part in~\eqref{kin}, i.e, by keeping $\lambda$ (GR is obtained by
setting $\lambda=1$), and the vacuum GR potential in~\eqref{calV},
i.e., a potential arising from $-R$ only, and hence when $k_1=-1$
and $k_2=k_3=k_4=k_5=k_6=0$ in~\eqref{calV}. These models suffice for
our goal of deriving the ODEs \eqref{full:subs}.
The case that modifies GR with more general potentials, the HL models are similar and can be found in~\cite[Appendix A.2]{HellLappicyUggla}. In particular, they heuristically argue that a broad class of HL models possess a dominant potential with asymptotic dynamics described by the $\lambda$-$R$ models.

To obtain succinct expressions for the spatial curvature, and thereby the potential $V={}^1V = - 12gR$,
we introduce the following auxiliary quantities 
\begin{subequations}\label{malpha}
\begin{align}
m_1 &:= n_1g_{11} = n_1 e^{2(2v\beta^\lambda - 2\beta^+)},\\
m_2 &:= n_2g_{22} = n_2 e^{2(2v\beta^\lambda + \beta^+ + \sqrt{3}\beta^-)},\\
m_3 &:= n_3g_{33} = n_3 e^{2(2v\beta^\lambda + \beta^+ - \sqrt{3}\beta^-)}.
\end{align}
\end{subequations}
Here we have introduced the parameter $v$, which is defined by
the relation
\begin{equation}\label{app:v}
v := \frac{1}{\sqrt{2(3\lambda - 1)}},
\end{equation}
and hence $\beta^0 = 2v\beta^\lambda$ due to~\eqref{betaLAMBDAandNcal}.
The parameter $v$ plays a prominent role in the evolution equations. Since we are
interested in continuous deformations of GR with $\lambda=1$, and thus
$v=1/2$, we restrict attention to $v\in (0,1)$.
Specializing the general expression for the spatial curvature in~\cite{elsugg97}
to the diagonal class A Bianchi models leads to 
\begin{equation}\label{R11}
R^1\!_1 = \frac{1}{2g}(m_1^2 - (m_2-m_3)^2),
\end{equation}
where $R^1\!_1 = g^{11}R_{11}
= g_{11}^{-1}R_{11}$, and similarly by permutations
for $R^2\!_2$ and $R^3\!_3$. It follows that the spatial scalar curvature
$R = R^1\!_1 + R^2\!_2 +R^3\!_3$ is given by
\begin{equation}\label{Rscalar}
R = -\frac{1}{2g}(m_1^2 + m_2^2 + m_3^2 - 2m_1m_2 - 2m_2m_3 - 2m_3m_1).
\end{equation}
This thereby yields the potential in~\eqref{VHL} and~\eqref{pots} with $k_1=-1$:
\begin{equation}\label{kinV}
V = {}^1V = -12 g R = 6(m_1^2 + m_2^2 + m_3^2 - 2m_1m_2 - 2m_2m_3 - 2m_3m_1),
\end{equation}
where $V$ depends on $\beta^\lambda$ and $\beta^\pm$ via $m_1$, $m_2$ and $m_3$,
according to equation~\eqref{malpha}.

The evolution equations for $\beta^\lambda$, $\beta^\pm$, $p_\lambda$, $p_\pm$
are obtained from Hamilton's equations, where $T$ and $V$
in the Hamiltonian~\eqref{LambdaRham} are given by~\eqref{Tkinetics} and~\eqref{kinV},
respectively, which yields
\begin{subequations}\label{HamiltonEQ}
\begin{align}
\dot{\beta}^\lambda &= \frac{\partial H}{\partial p_\lambda}
= -{\cal N} p_\lambda, \qquad && \dot{p}_\lambda
= - \frac{\partial H}{\partial\beta^\lambda}
= - \mathcal{N} \frac{\partial V}{\partial\beta^\lambda}, \label{betadotham}\\
\dot{\beta}^\pm &= \frac{\partial H}{\partial{p}_\pm}
= {\cal N}{p}_\pm, \qquad &&\dot{p}_\pm
= - \frac{\partial H}{\partial{\beta}^\pm}
= - \mathcal{N} \frac{\partial V}{\partial{\beta}^\pm},
\end{align}
\end{subequations}
while the Hamiltonian constraint $T+V=0$ is obtained by varying ${\cal N}$.

Next, we choose a new time variable
$\tau_-:=-\beta^\lambda$, 
which is directed toward the physical past, since we are considering
expanding models. This is accomplished by setting ${\cal N} =  p_\lambda^{-1}$
in the first equation in~\eqref{betadotham}, and thereby $N = 12\sqrt{g}/ p_\lambda$,
which results in the following evolution equations:
\begin{subequations}\label{HamiltonEQ2}
\begin{align}
\frac{d\beta^\lambda}{d\tau_-} &= -1, \qquad
&&\frac{d p_\lambda}{d\tau_-} =
-\frac{1}{ p_\lambda} \frac{\partial V}{\partial\beta^\lambda}, \\
\frac{d\beta^\pm}{d\tau_-} &= \frac{{p}_\pm}{ p_\lambda}, \qquad &&
\frac{dp_\pm}{d\tau_-} = -\frac{1}{ p_\lambda} \frac{\partial V}{\partial{\beta}^\pm}.
\end{align}
\end{subequations}

We then rewrite the system~\eqref{HamiltonEQ2} and the constraint $T+V=0$
using the non-canonical variable transformation,
\begin{equation}\label{SigmaNvariables}
\Sigma_\pm := - \frac{p_\pm}{ p_\lambda}, \qquad \qquad \qquad N_\alpha
:= - 2\sqrt{3}\left(\frac{m_\alpha}{ p_\lambda}\right),
\end{equation}
while keeping $p_\lambda$. Note that $\Sigma_\pm = d\beta^\pm/d\beta^\lambda = -d\beta^\pm/d\tau_-$.

These variables lead to a decoupling 
of the evolution equation for the variable $ p_\lambda$,
\begin{equation}\label{p0prime}
p_\lambda^\prime = -4v(1-\Sigma^2) p_\lambda,
\end{equation}
where ${}^\prime$ denotes the derivative $d/d\tau_-$.
This yields the following reduced system of evolution equations
\begin{subequations}\label{dynsyslambdaR}
\begin{align}
\Sigma_\pm^\prime &= 4v(1-\Sigma^2)\Sigma_\pm + {\cal S}_\pm,\\
N_1^\prime &= -2(2v\Sigma^2 - 2\Sigma_+)N_1,\\
N_2^\prime &= -2(2v\Sigma^2 + \Sigma_+ + \sqrt{3}\Sigma_-)N_2,\\
N_3^\prime &= -2(2v\Sigma^2 + \Sigma_+ - \sqrt{3}\Sigma_-)N_3,
\end{align}
while the Hamiltonian constraint $T+V=0$ results in
\begin{equation}\label{constrpmVIIIIX}
1 - \Sigma^2 - \Omega_k =0,
\end{equation}
\end{subequations}
where
\begin{subequations}\label{LambdaRquantities}
\begin{align}
\Sigma^2 &:= \Sigma_+^2 + \Sigma_-^2,\\
\Omega_k &:= N_1^2 + N_2^2 + N_3^2 - 2N_1N_2 - 2N_2N_3 - 2N_3N_1,\\
{\cal S}_+ &:= 2[(N_2 - N_3)^2 - N_1(2N_1 - N_2 - N_3)],\\
{\cal S}_- &:= 2\sqrt{3}(N_2 - N_3)(N_2 + N_3 - N_1).
\end{align}
\end{subequations}

Note that the variables $\Sigma_\pm$, $N_1$, $N_2$ and $N_3$,
defined in~\eqref{SigmaNvariables}, are \emph{dimensionless}. Dimensions
can be introduced in various ways, but terms in a sum
must all have the same dimension. The constraint~\eqref{constrpmVIIIIX}
is such a sum. Since this sum contains 1,
which 
obviously is dimensionless, it follows that $\Sigma_+$, $\Sigma_-$,
$N_1$, $N_2$ and $N_3$ are dimensionless, and so is the time variable
$\tau_-$, as follows from inspection of~\eqref{dynsyslambdaR}.
The vacuum GR equations are obtained by setting $v=1/2$.

In~\cite[Appendix A.2]{HellLappicyUggla}, it is heuristically argued that a broad range of HL models have asymptotic dynamics described by the $\lambda$-$R$ evolution equations \eqref{dynsyslambdaR}. To achieve this, they use Misner's approximation scheme of a `particle' moving in a potential well in $(\beta^+,\beta^-)\in\mathbb{R}^2$ space as $\tau_- = - \beta^\lambda \rightarrow \infty$, which was introduced to understand the initial Bianchi type $\mathrm{IX}$ singularity in GR, see~\cite{mis69a,mis69b,waiell97,jan01}.
For HL, each potential term in \eqref{calV} has its associated `moving walls' that move with velocity ${}^iv$. Among those, there is a dominant potential term which yields the same evolution equations as the $\lambda$-$R$ models in \eqref{dynsyslambdaR}, but with different parameters ${}^iv$ given by \eqref{v's} instead of the parameter $v$ in \eqref{app:v}.

\section{Computer-assisted proof}
\label{appendix:step_2}

We now proceed to prove the two remaining claims (i) and (ii) in Section \ref{sec:typeVIIandIX_Proofs}.

The following proposition is the core result to complete (i). More precisely, given a numerical approximation $x_0$ of a zero of $F$, it gives sufficient conditions to find the radius $r_0$ of a ball centred at $x_0$ within which there exists a unique true zero $\tilde{x}$ of $F$.

\begin{prop}\label{prop:radii_polynomial}
Let $R > 0$ and $x_0 := (a_{0,1}, a_{0,2}, a_{0,3}, a_{0,4}, a_{0,5}, \gamma_0, 0, 0) \in X$. Consider a linear bounded injective operator $\mathcal{A} : X \to X$ satisfying
\begin{subequations}
\begin{align}
|\mathcal{A} F(x_0)|_X &\leq Y, \\
| I - \mathcal{A} DF(x_0) |_{\mathscr{B}(X, X)} &\leq  Z_1, \\
\sup_{x \in \textnormal{cl}(B_R(x_0))} | \mathcal{A} D^2 F(x) |_{\mathscr{B}(X^2, X)}  &\leq Z_2.
\end{align}
\end{subequations}
for some constants $Y, Z_1, Z_2 \geq 0$. If there exists $r_0 \in [0, R]$ such that
\begin{equation}\label{prop4.3:bounds}
Y + (Z_1 - 1) r_0 + \frac{Z_2}{2} r_0^2 \leq 0 \qquad \text{and} \qquad Z_1 + Z_2 r_0 < 1,  
\end{equation}
then there exists a unique $\tilde{x} := (\tilde{a}_1, \tilde{a}_2, \tilde{a}_3, \tilde{a}_4, \tilde{a}_5, \tilde{\gamma}, \tilde{\eta}_1, \tilde{\eta}_2) \in \textnormal{cl}(B_{r_0}(x_0))$ such that $F(\tilde{x}) = 0$.

Furthermore, if $\gamma_0 \in \mathbb{R}$ and $a_{0,1}, a_{0,2}, a_{0,3}, a_{0,4}, a_{0,5}$ are sequences of Fourier coefficients of real Fourier series, then $\tilde{\gamma}, \tilde{\eta}_1, \tilde{\eta}_2 \in \mathbb{R}$ and $\tilde{a}_1, \tilde{a}_2, \tilde{a}_3, \tilde{a}_4, \tilde{a}_5$ are sequences of Fourier coefficients of real Fourier series.
\end{prop}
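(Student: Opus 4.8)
The plan is to convert the equation $F(\tilde x)=0$ into a fixed-point problem for the Newton-like operator $T(x):=x-\mathcal{A}F(x)$ and to apply the Banach fixed-point theorem on the complete metric space $\textnormal{cl}(B_{r_0}(x_0))$. Since $\mathcal{A}$ is injective, a point is a fixed point of $T$ if and only if it is a zero of $F$, so it suffices to show that $T$ maps $\textnormal{cl}(B_{r_0}(x_0))$ into itself and is a contraction there. A preliminary point I would settle first is regularity: although $F$ is unbounded (because of the differentiation operator $\mathcal{D}$), the composition $\mathcal{A}F$ is $C^2$ on a neighbourhood of $\textnormal{cl}(B_R(x_0))$ --- each component of $F$ is polynomial in the Fourier sequences and affine in $(\gamma,\eta_1,\eta_2)$, and the finiteness of $Z_1,Z_2$ is precisely the statement that $\mathcal{A}DF(x_0)\in\mathscr{B}(X,X)$ and $\mathcal{A}D^2F(x)\in\mathscr{B}(X^2,X)$ on that ball --- so that $DT(x)=I-\mathcal{A}DF(x)$ and $D^2T(x)=-\mathcal{A}D^2F(x)$ are well defined there.

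Next I would prove the self-mapping property. For $x\in\textnormal{cl}(B_{r_0}(x_0))$ I write $T(x)-x_0=-\mathcal{A}F(x_0)+\int_0^1\big(I-\mathcal{A}DF(x_0+t(x-x_0))\big)(x-x_0)\,dt$ and estimate the integrand by a further Taylor expansion of $\mathcal{A}DF$ about $x_0$, using the $Z_2$-bound (legitimate since the relevant segments stay inside $\textnormal{cl}(B_R(x_0))$), which gives $|I-\mathcal{A}DF(x_0+t(x-x_0))|_{\mathscr{B}(X,X)}\le Z_1+Z_2\,t\,r_0$. Integrating over $t\in[0,1]$ and combining with $|\mathcal{A}F(x_0)|_X\le Y$ yields $|T(x)-x_0|_X\le Y+Z_1 r_0+\tfrac{Z_2}{2}r_0^2\le r_0$, where the last inequality is exactly the first condition in \eqref{prop4.3:bounds}. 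For the contraction property, for $x,y\in\textnormal{cl}(B_{r_0}(x_0))$ the segment joining them stays in $\textnormal{cl}(B_R(x_0))$ by convexity, so $|I-\mathcal{A}DF(\cdot)|_{\mathscr{B}(X,X)}\le Z_1+Z_2 r_0<1$ along it by the second condition in \eqref{prop4.3:bounds}, and the fundamental theorem of calculus gives $|T(x)-T(y)|_X\le(Z_1+Z_2 r_0)|x-y|_X$. The Banach fixed-point theorem then delivers the unique $\tilde x\in\textnormal{cl}(B_{r_0}(x_0))$ with $F(\tilde x)=0$ (the case $r_0=0$ being the trivial observation that then $Y=0$, hence $F(x_0)=0$).

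For the reality statement I would introduce the conjugation involution $\iota\colon X\to X$ acting by $(\iota a)_k:=\overline{a_{-k}}$ on each Fourier component and by complex conjugation on $\gamma,\eta_1,\eta_2$. A sequence $a$ consists of the Fourier coefficients of a real series precisely when $\iota a=a$, and $\iota$ is an isometry of $X$ since $|\overline{a_{-k}}|\nu^{|k|}=|a_{-k}|\nu^{|-k|}$. A short check shows $\iota$ commutes with the discrete convolution, with $\mathcal{D}$ and with $\mathcal{E}$, and all scalars occurring in $F$ (including $\sqrt{3}$ and $c$) are real, whence $F\circ\iota=\iota\circ F$. Under the hypothesis $\iota x_0=x_0$, the point $\iota\tilde x$ is then also a zero of $F$ in $\textnormal{cl}(B_{r_0}(x_0))$, because $|\iota\tilde x-x_0|_X=|\iota(\tilde x-x_0)|_X=|\tilde x-x_0|_X\le r_0$; by the uniqueness just established, $\iota\tilde x=\tilde x$, which is exactly the claim that $\tilde\gamma,\tilde\eta_1,\tilde\eta_2\in\mathbb{R}$ and $\tilde a_1,\dots,\tilde a_5$ are Fourier coefficients of real Fourier series.

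I expect the only genuinely delicate point to be the preliminary regularity step: making $T=I-\mathcal{A}F$ a bona fide $C^2$ self-map in spite of the unboundedness of $F$, which is where the particular construction of the approximate inverse $\mathcal{A}$ (so that $\mathcal{A}DF(x_0)$ is a bounded near-identity operator) is essential, and which is in effect being assumed through the finiteness of $Y,Z_1,Z_2$. Everything after that is the standard Newton--Kantorovich contraction estimate, and the preservation of the real structure follows for free from the $\iota$-equivariance of $F$ together with the uniqueness of $\tilde x$.
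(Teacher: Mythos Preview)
Your proposal is correct and follows essentially the same route as the paper: define $T(x)=x-\mathcal{A}F(x)$, use a second-order Taylor/mean-value estimate to get $|T(x)-x_0|_X\le Y+Z_1r_0+\tfrac{Z_2}{2}r_0^2$ and Lipschitz constant $Z_1+Z_2r_0$, apply the Banach fixed-point theorem, and conclude reality via the conjugation involution (the paper's $\dagger$ is your $\iota$) together with uniqueness. The only mild difference is that the paper disposes of the regularity issue you flag by introducing an auxiliary weighted space $\mathcal{X}$ so that $F\colon X\to\mathcal{X}$ and $\mathcal{A}\colon\mathcal{X}\to X$ are genuinely bounded, and it spends most of its proof on the \emph{constructive} choice of $\mathcal{A}$ and explicit formulae for $Y,Z_1,Z_2$---content not logically required for the proposition as stated but needed for the computer-assisted implementation.
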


\begin{proof}
The proof is constructive to provide practical computational insights. To summarize, we construct the aforementioned operator $\mathcal{A}$ and the bounds $Y, Z_1, Z_2$. Secondly, we obtain a Newton-like operator $T$ which is a contraction in $\textnormal{cl}(B_{r_0}(x_0))$, given that the error bound $r_0\in [0,R]$ satisfies \eqref{prop4.3:bounds}. This yields a zero $\tilde{x}$ of $F$. Lastly, we address the properties of $\tilde{x}$ given in the last paragraph of the proposition.

Firstly, given a fixed projection dimension number $n \in \mathbb{N}$, consider the projection operator $\pi^n : X_\textnormal{Fourier} \to X_\textnormal{Fourier}$ defined by
\begin{equation}\label{eq:proj}
(\pi^n a)_k :=
\begin{cases}
a_k, & |k| \leq n,\\
0, & |k| \geq n+1,
\end{cases} \qquad \text{for all } a \in X_\textnormal{Fourier}.
\end{equation}
This operator is extended to an operator (denoted with the same symbol) on $X$ by
\begin{equation}\label{eq:proj2}
\pi^n x = (\pi^n a_1, \pi^n a_2, \pi^n a_3, \pi^n a_4, \pi^n a_5, \gamma, \eta_1, \eta_2), \quad \text{for all } x := (a_1, a_2, a_3, a_4, a_5, \gamma, \eta_1, \eta_2) \in X.
\end{equation}
It is clear that $\pi^n$ is a projection and, defining $\pi^{\infty(n)} := I - \pi^n$, we have the decomposition $X = \pi^n X \oplus \pi^{\infty(n)} X$. Intuitively, to obtain norms estimate in $X$, we carefully split the bounds into a part in $\pi^n X$ handled by the computer and a part in $\pi^{\infty(n)} X$ controlled theoretically.

Introduce a Banach space $\mathcal{X} := \mathcal{X}_\textnormal{Fourier}^5\times\mathbb{C}^3$, where 
\begin{equation*}
    \mathcal{X}_\textnormal{Fourier}:= \left\{ a\in\mathbb{C}^\mathbb{Z} : |a|_{\mathcal{X}_\textnormal{Fourier}}:=\sum_{k\in\mathbb{Z}}|a_k|\frac{\nu^{|k|}}{|k|}\right\}.    
\end{equation*}

It is clear that the unbounded operator $F$ can be interpreted as a bounded nonlinear operator $F:X \to \mathcal{X}$. Similarly, consider the bounded linear operator $L : X \to \mathcal{X}$ given by
\begin{equation*}
L x :=
\begin{pmatrix}
-\mathcal{D}(a_1) \\
-\mathcal{D}(a_2) \\
-\mathcal{D}(a_3) \\
-\mathcal{D}(a_4) \\
-\mathcal{D}(a_5) \\
\mathcal{E}(a_1) \\
\sum_{j = 1}^5 \left[\partial_{a_j} (\Sigma^2 + \Omega_k)(a_1, \dots, a_5)\right]_{a_1 = a_{0,1}, \dots, a_5 = a_{0,5}} a_j \\
\sum_{j = 3}^5 \left[\partial_{a_j} (\mathcal{E}(a_3) \mathcal{E}(a_4), \mathcal{E}(a_5))\right]_{a_3 = a_{0,3},a_4 = a_{0,4},a_5 = a_{0,5}} a_j
\end{pmatrix},
\end{equation*}
for all $x := (a_1, a_2, a_3, a_4, a_5, \gamma, \eta_1, \eta_2) \in X$ such that $Lx \in X$. 

In this scope, let $L|_{\pi^{\infty(2n)} X} : \pi^{\infty(2n)} X \to \pi^{\infty(2n)} X$ be defined as $L|_{\pi^{\infty(2n)} X} x = \pi^{\infty(2n)} L x$ for all $x \in \pi^{\infty(2n)} X$ such that $Lx \in X$. Also, set $\mathcal{A} := A \pi^{2n} + L|_{\pi^{\infty(2n)} X}^{-1} \pi^{\infty(2n)}$ where $A : \pi^{2n} X \to \pi^{2n} X$ is defined as an approximation of $(\pi^{2n} DF(x_0) \pi^{2n})^{-1}$. By construction, the operator $\mathcal{A}:\mathcal{X}\rightarrow X$ is linear, bounded and injective.

Let us give explicit formulae for the bounds $Y, Z_1, Z_2$:
\begin{enumerate}
\item
For $a, b \in \pi^n X_{\textnormal{Fourier}}$, we have $a * b \in \pi^{2n} X_{\textnormal{Fourier}}$. Hence, $F(x_0) \in \pi^{2n} X$ and
\begin{equation*}
|\mathcal{A} F(u_0)|_X = |A F(u_0)|_X =: Y.    
\end{equation*}
\item
For $a \in \pi^n X_{\textnormal{Fourier}}$ and $\mathcal{M}_a (b) := a * b$ for all $b \in X_\textnormal{Fourier}$, we have $\pi^{2n} \mathcal{M}_a = \pi^{2n} \mathcal{M}_a \pi^{3n}$. Hence, $\pi^{2n} DF(x_0) = \pi^{2n} DF(x_0) \pi^{3n} + \pi^{2n} L \pi^{\infty(3n)}$ and we obtain
\begin{align*}
&| I - \mathcal{A} DF(x_0)|_{\mathscr{B}(X, X)} \\
&= | I - (A \pi^{2n} + L|_{\pi^{\infty(2n)}X}^{-1} \pi^{\infty(2n)}) DF(x_0) |_{\mathscr{B}(X, X)} \\
&= | I - A \pi^{2n} DF(x_0) + L|_{\pi^{\infty(2n)}X}^{-1} \pi^{\infty(2n)} L + L|_{\pi^{\infty(2n)}X}^{-1} \pi^{\infty(2n)} ( DF(x_0) - L ) |_{\mathscr{B}(X, X)} \\
&\leq | \pi^{2n} - A \pi^{2n} DF(x_0) \pi^{3n} |_{\mathscr{B}(X, X)} + | L|_{\pi^{\infty(2n)}X}^{-1} |_{\mathscr{B}(X, X)} | DF(x_0) - L |_{\mathscr{B}(X, X)} + \\ &\qquad | A |_{\mathscr{B}(X, X)} | \pi^{2n} L \pi^{\infty(3n)} |_{\mathscr{B}(X, X)} \\
&=: Z_1.
\end{align*}
\item
By the triangle inequality, we obtain
\begin{align*}
&\sup_{x \in \textnormal{cl} (B_R(x_0))} | \mathcal{A} D^2 F(x) |_{\mathscr{B}(X^2, X)} \\
&= \sup_{x \in \textnormal{cl} (B_R(x_0))} | ( A\pi^{2n} + L|_{\pi^{\infty(2n)}X}^{-1} \pi^{\infty(2n)} ) D^2 F(x) |_{\mathscr{B}(X^2, X)} \\
&\leq  (| A |_{\mathscr{B}(X, X)} + | L|_{\pi^{\infty(2n)}X}^{-1} \pi^{\infty(2n)} |_{\mathscr{B}(X, X)}) \sup_{x \in \textnormal{cl} (B_R(x_0))} | D^2 F(x) |_{\mathscr{B}(X^2, X)} \\
&\leq  (| A |_{\mathscr{B}(X, X)} + | L|_{\pi^{\infty(2n)}X}^{-1} \pi^{\infty(2n)} |_{\mathscr{B}(X, X)}) \zeta \\
&=: Z_2,
\end{align*}
for some $\zeta = \zeta(x_0, R) \ge \sup_{x \in \textnormal{cl} (B_R(x_0))} | D^2 F(x) |_{\mathscr{B}(X^2, X)}$ obtained in practice by applying the triangle inequality.

\end{enumerate}
Consider the fixed-point operator $T : X \to X$ given by $T(x) := x - \mathcal{A} F(x)$, which is twice Fr\'{e}chet differentiable.
By hypothesis, there exists $r_0 \in [0, R]$ satisfying the bounds \eqref{prop4.3:bounds}. 
On the one hand, a second-order Taylor expansion of $T$ yields
\begin{equation*}
    |T(x) - x_0|_X \leq Y + Z_1 r_0 + \frac{Z_2}{2} r_0^2 \leq r_0, \quad \text{for all } x \in \textnormal{cl}(B_{r_0}(x_0)).
\end{equation*}
On the other hand, the Mean Value Theorem implies
\begin{equation*}
    | T(x_1) - T(x_2)|_X \leq (Z_1 + Z_2 r_0) | x_1 - x_2 |_X < | x_1 - x_2 |_X, \quad \text{for all } x_1, x_2 \in \textnormal{cl}(B_{r_0}(x_0)).
\end{equation*}
Thus, $T$ satisfies the Banach Fixed-Point Theorem in $\textnormal{cl}(B_{r_0}(x_0))$: there exists a unique $\tilde{x} \in \textnormal{cl}(B_{r_0}(x_0))$ such that $T(\tilde{x}) = \tilde{x}$. By injectivity of $\mathcal{A}$, it follows that $F(\tilde{x}) = 0$.

Finally, we want to prove that $\tilde{x} := (\tilde{a}_1, \tilde{a}_2, \tilde{a}_3, \tilde{a}_4, \tilde{a}_5, \tilde{\gamma}, \tilde{\eta}_1, \tilde{\eta}_2)$ satisfy $\tilde{\gamma}, \tilde{\eta}_1, \tilde{\eta}_2 \in \mathbb{R}$ and $\tilde{a}_1, \tilde{a}_2, \tilde{a}_3, \tilde{a}_4, \tilde{a}_5$ are sequences of Fourier coefficients of real Fourier series. Mathematically, this means that $\tilde{\gamma} = \overline{\tilde{\gamma}}$, $\tilde{\eta}_1 = \overline{\tilde{\eta}_1}$, $\tilde{\eta}_2 = \overline{\tilde{\eta}_2}$ and $(\tilde{a}_j)_k = \overline{(\tilde{a}_j)_{-k}}$ for all $k \in \mathbb{Z}$, $j = 1, \dots, 5$. Note that the overline represents the complex conjugacy.

For convenience, we introduce the symbol $\dagger : X \to X$ defined by
\begin{equation*}
\begin{aligned}
\dagger(x) = \left(\left\{ \overline{(a_1)_{-k}} \right\}_{k \in \mathbb{Z}}, \left\{ \overline{(a_2)_{-k}} \right\}_{k \in \mathbb{Z}}, \left\{ \overline{(a_3)_{-k}} \right\}_{k \in \mathbb{Z}}, \left\{ \overline{(a_4)_{-k}} \right\}_{k \in \mathbb{Z}}, \left\{ \overline{(a_5)_{-k}} \right\}_{k \in \mathbb{Z}}, \overline{\gamma}, \overline{\eta_1}, \overline{\eta_2}\right), \\
\text{for all } x = (a_1, a_2, a_3, a_4, a_5, \gamma, \eta_1, \eta_2) \in X.
\end{aligned}
\end{equation*}
In particular, if $x = \dagger(x)$ for some $x = (a_1, a_2, a_3, a_4, a_5, \gamma, \eta_1, \eta_2) \in X$, then $a_1, a_2, a_3, a_4, a_5$ are Fourier coefficients of real functions.

By assumption $x_0 = \dagger(x_0)$. Note that $| \dagger(\tilde{x}) - x_0 |_X = | \dagger(\tilde{x}) - \dagger(x_0) |_X = | \dagger(\tilde{x} - x_0) |_X = | \tilde{x} - x_0 |_X$, i.e. $\dagger(\tilde{x}) \in \textnormal{cl}(B_{r_0}(x_0))$. Also, it holds that $F(\dagger(\tilde{x})) = \dagger(F(\tilde{x})) = 0$. Whence, by local uniqueness, $\tilde{x} = \dagger(\tilde{x})$ which concludes the proof.
\end{proof}


Next, to complete (ii) in Section \ref{sec:typeVIIandIX_Proofs}, consider a numerical approximation given by $x_0 := (a_{0,1}, a_{0,2}, a_{0,3}, a_{0,4}, a_{0,5}, \gamma_0, 0, 0) \in \pi^n X$, where the truncation operator $\pi^n$ is defined in \eqref{eq:proj}-\eqref{eq:proj2}, such that $\gamma_0 \in \mathbb{R}$ and $a_{0,1}, a_{0,2}, a_{0,3}, a_{0,4}, a_{0,5}$ are sequences of Fourier coefficients of real Fourier series. Assume that Proposition \ref{prop:radii_polynomial} is satisfied: there exists $r_0 \in [0, +\infty)$ and thus a unique $\tilde{x} := (\tilde{a}_1, \tilde{a}_2, \tilde{a}_3, \tilde{a}_4, \tilde{a}_5, \tilde{\gamma}, \tilde{\eta}_1, \tilde{\eta}_2) \in \textnormal{cl}(B_{r_0}(x_0))$ such that $F(\tilde{x})=0$ and $\tilde{\gamma}, \tilde{\eta}_1, \tilde{\eta}_2 \in \mathbb{R}$, $\tilde{a}_1, \tilde{a}_2, \tilde{a}_3, \tilde{a}_4, \tilde{a}_5$ are sequences of Fourier coefficients of real Fourier series.

We now detail the algorithm to verify that $\Sigma_+(t) := \sum_{k \in \mathbb{Z}} (\tilde{a}_2)_k e^{i k t}$ is not identically zero and $N_1 N_3 (t) := \sum_{k \in \mathbb{Z}} (\tilde{a}_3 * \tilde{a}_5)_k e^{i k t}$ is not identically zero and does not change sign.

To evaluate $\Sigma_+$, $N_1$ and $N_3$ rigorously, we use the interval enclosures
\begin{subequations}
\begin{align}
\label{eq:Sigma_enclosure} \Sigma_+(t) &\in \left(\sum_{|k| \leq n} (a_{0,1})_k e^{i k t}\right) + [- r_0, r_0], \qquad t \in \mathbb{R},\\
\label{eq:N_enclosure} N_j(t) &\in \left(\sum_{|k| \leq n} (a_{0,j+2})_k e^{i k t}\right) + [- r_0, r_0], \qquad j\in\{1,3\}, \qquad  t \in \mathbb{R}.
\end{align}
\end{subequations}
Firstly, to establish that $\Sigma_+$ and $N_1N_3$ are not identically zero, we choose some times $t_1,t_2,\in[0,2\pi]$ and compute the interval enclosures of $\Sigma_+(t_1)$ (via \eqref{eq:Sigma_enclosure}), $N_1(t_2)$ and $N_3(t_2)$ (via \eqref{eq:N_enclosure}). If these intervals do not contain zero, then it must be the case that $\Sigma_+(t_1)\neq 0$ and $N_1(t_2)N_3(t_2)\neq 0$.

Secondly, we show that $N_1 N_3$ has constant sign in $[0, 2\pi)$. It is sufficient to prove that each of $N_1$ and $N_3$ does not have a zero in $[0,2\pi]$. Note that using \eqref{eq:auxiliary_odeN1}, \eqref{eq:auxiliary_odeN3} and Proposition \ref{prop:radii_polynomial}, we can rigorously bound the derivatives $N_1'$ and $N_3'$ as follows:
\begin{align}
|N_1'|_\infty = \sup_{t\in[0,2\pi]}|N_1'(t)| &\leq 4(|\gamma_0| + r_0)(|a_{0,1}|_{X_\textnormal{Fourier}}+r_0)(|a_{0,3}|_{X_\textnormal{Fourier}}+r_0),\nonumber\\
|N_3'|_\infty = \sup_{t\in[0,2\pi]}|N_3'(t)| &\leq 2(|\gamma_0| + r_0)(|a_{0,5}|_{X_\textnormal{Fourier}}+r_0)(|a_{0,1}|_{X_\textnormal{Fourier}}+r_0 + \sqrt{3}(|a_{0,2}|_{X_\textnormal{Fourier}}+r_0)).\nonumber
\end{align}
To verify that $N_1$ and $N_3$ have no zeros, we use an inductive argument; for clarity, we present the argument only for $N_1$ (the case of $N_3$ is treated similarly). Set $t_0 := 0$ and evaluate the enclosure \eqref{eq:N_enclosure} for $N_1(t_0)$. If this enclosure does not contain zero, then $N_1(t_0)\neq 0$. Since $|N_1'|_\infty$ is a Lipschitz constant for $N_1$, there can be no zero in the interval $[t_0,t_0 + |N_1(t_0)|/|N_1'|_\infty)$. Therefore, define the time iterates
\begin{equation*}
t_j := t_{j-1} + \frac{\left|\sum_{|k|\leq n} (a_{0,3})_k e^{i k t_{j-1}}\right| - r_0}{|N_1'|_\infty}, \qquad j \in \mathbb{N}.    
\end{equation*}
By construction, $N_1$ does not have a zero in $[t_0,t_1]$. As an inductive hypothesis, suppose $N_1$ does not have a zero in $[t_0,t_j]$ for some $j \in \mathbb{N}$. We verify that $N_1(t_j)\neq 0$ using an inclusion check with \eqref{eq:N_enclosure}. Then, by construction, the interval $[t_0,t_{j+1}]$ will not contain zero. We halt the iteration as soon as some $j^* \in \mathbb{N}$ is reached such that $t_{j^*} \geq 2\pi$.

\textbf{Acknowledgments.} 
PL was funded by CNPq, 163527/2020-2 and 160956/2022-6. JPL was funded by NSERC.

\textbf{Data Availability Statement.} 
We confirm that all relevant data are included in the article.

\bibliographystyle{plain}

\end{document}